\theoremstyle{plain}
\newtheorem{theorem}{Theorem}
\newtheorem{lemma}{Lemma}
\newtheorem{fact}{Fact}
\newtheorem{definition}{Definition}
\theoremstyle{definition}
\theoremstyle{remark}
\newtheorem{remark}{Remark}
\newcommand{\expect}{\operatorname{\mathbb{E}}\expectarg}
\DeclarePairedDelimiterX{\expectarg}[1]{[}{]}{%
  \ifnum\currentgrouptype=16 \else\begingroup\fi
  \activatebar#1
  \ifnum\currentgrouptype=16 \else\endgroup\fi
}
\newcommand{\prob}{\operatorname{\mathbb{P}}\probarg}
\DeclarePairedDelimiterX{\probarg}[1]{(}{)}{%
  \ifnum\currentgrouptype=16 \else\begingroup\fi
  \activatebar#1
  \ifnum\currentgrouptype=16 \else\endgroup\fi
}
\newcommand{\innermid}{\nonscript\;\delimsize\vert\nonscript\;}
\newcommand{\activatebar}{%
  \begingroup\lccode`\~=`\|
  \lowercase{\endgroup\let~}\innermid 
  \mathcode`|=\string"8000
}
\def\*#1{\mathbf{#1}}
\def\&#1{\mathcal{#1}}
\def\.#1{\boldsymbol#1}
\def\^#1{\hat{#1}}
\def\[#1{\left #1}
\def\]#1{\right #1}
\DeclareMathOperator*{\argmin}{arg\,min\ }
\DeclareMathOperator*{\argmax}{arg\,max\ }
\DeclareMathAlphabet\mathbfcal{OMS}{cmsy}{b}{n}
\author{\IEEEauthorblockN{Sung-En Chiu and Tara Javidi}
\thanks{This paper was presented in part at Information Theory Workshop 2016.}
\vspace{2mm}

\IEEEauthorblockA{Department of Electrical and Computer Engineering \\
University of California, San Diego \\
Email: \{suchiu,tara\}@ucsd.edu}
}
\newcommand{\rv}[1]{{\color{black}{{#1}}}}
\begin{document}

\onehalfspacing

\title{\rv{Low Complexity Sequential Search with Size-Dependent Measurement Noise}}


\maketitle

\begin{abstract}
This paper considers a target localization problem where at any given time an agent can choose a region to query for the presence of the target in that region. The measurement noise is assumed to be increasing with the size of the query region the agent chooses. Motivated by practical applications such as initial beam alignment in array processing, heavy hitter detection in networking, and visual search in robotics, we consider practically important complexity constraints/metrics:  \textit{time complexity}, \textit{computational and memory complexity}, \rv{and the complexity of possible query sets in terms of geometry and cardinality.} 

Two novel search strategy, $dyaPM$ and $hiePM$, are proposed. \rv{Pertinent to the practicality of out solutions}, $dyaPM$ and $hiePM$ are of a connected query geometry (i.e. query set is always a connected set) implemented with low computational and memory complexity. Additionally, $hiePM$ has a hierarchical structure and, hence, a further reduction in the cardinality of possible query sets, making $hiePM$ practically suitable for applications such as beamforming in array processing where \rv{memory limitations favors a smaller codebook size}.

Through a unified analysis with Extrinsic Jensen Shannon (EJS) Divergence, $dyaPM$ is shown to be asymptotically optimal in search time complexity (asymptotic in both resolution (rate) and error (reliability)). On the other hand, $hiePM$ is shown to be near-optimal in rate. In addition, both $hiePM$ and $dyaPM$ are shown to outperform prior work in the non-asymptotic regime.  

\end{abstract}
\begin{IEEEkeywords}
sequential search, size-dependent measurement noise, Posterior Matching, Extrinsic Jensen Shannon Divergence
\end{IEEEkeywords}
\section{Introduction}

\setcounter{footnote}{1}

We consider a target search problem where at any given time, an agent can choose a query set inspected for the presence of the target. More precisely, upon querying a set, the agent receives a noisy measurement indicating the presence of the target in the set. The agent conducts multiple queries where each query set can, in general, be chosen adaptively and strategically based on previous (noisy) measurements. The main focus of this paper is to design and analyze \rv{low complexity search strategies} under a realistic model. Motivated by many practical applications, such as spectrum sensing \cite{Ronquillo2017} in cognitive radio, Angle-of-Arrival (AoA) estimation in initial beam alignment \cite{Chiu2019}, and heavy hitter detection in networking \cite{Wang2018}, \rv{the measurements are assumed to be subject to an additive noise whose statistics is assumed to be dependent on the \textit{size} of the inspection region}. More precisely, querying a larger region results in a noisier measurement than querying a smaller region. With binary measurements and Bernoulli noise, for instance, this would mean that the false alarm and miss detection of each query is a non-decreasing function of the size of the query set. 

\subsection{\rv{Literature Review}}
The problem of binary noisy search for a single target with measurement-independent noise has been studied extensively \cite{Horstein63,burnashev1974interval,Borgstrom1993,Karp2007,Or2008,Naghshvar2013ISIT,Naghshvar2013ASH,Naghshvar2015,Henderson2013,Li2014,Shayevitz2016} in the literature. Relying on connections with feedback coding, the authors in \cite{Horstein63,burnashev1974interval} proposed a noisy variant of the binary search algorithm, where the query area is designed to partition the posterior at its median. Posterior Matching (PM) strategy proposed in \cite{Shayevitz11} generalizes the noisy binary search algorithm and its analysis to a general DMC case (we refer to the noisy variant of binary search algorithm as medianPM). In particular, \cite{Shayevitz11} established the rate-optimality where the targeting rate is defined as the asymptotic ratio of the logarithm of the search resolution over the number of queries. Relying on a connection to hypothesis testing \cite{Naghshvar2013ISIT} and \cite{Naghshvar2013ASH} proposed schemes that achieve the optimal rate-reliability trade-off. By allowing for a random search time, \cite{Naghshvar2015} characterized the reliability of medianPM, where reliability is defined to be the asymptotic ratio of the logarithm of error probability over the (expected) number of queries. \rv{In \cite{Henderson2013}, the author analyzed the mean square error of the target location and its estimate over time with medianPM over continuous target. In \cite{Li2014}, the author proposed a randomized Posterior Matching in the context of channel coding with feedback and anaylzed the error exponent of the proposed feedback codes. A hierarchical query set was used in a control sensing setting with communication constraint \cite{Simsek2004}, where the authors proposed a bit-by-bit sensing algorithm. 

In practical applications, the measurement noise statistics may depend on the query set. For instance, the measurement error may increase when the query set is close to the target. This \textit{distance-dependent} measurement noise has been considered in the context of active learning \cite{Nowak2011,Yan2015,Yan2016}. In particular, medianPM generalizes well in distance-dependent noise and is shown to be near optimal in \cite{Nowak2011}. In other applications such as beamforming in wireless communication \cite{Alkhateeb2014}, visual search in robotics\cite{Naghshvar2013ISIT}, and heavy hitter detection in Networking \cite{Wang2018}, measurement noise may be \textit{size-dependent}, i.e. querying a larger region results in a noisier measurement than querying a smaller region. In contrast to the case of distance-dependent noise, medianPM performs relative poorly in the case of size-dependent noise. The problem of noisy search with size-dependent measurement noise was first explicitly formulated in~\cite{Naghshvar2013ISIT}, where the author proposed a search strategy, maxEJS, which can be shown to be order optimal (logarithmically) in time complexity. MaxEJS selects the query set by maximizing a symmetrized divergence, known as Extrinsic Jensen Shannon (EJS) divergence (a function of the posterior) exhaustively over all possible query sets. However, the prohibitive computational complexity of the exhaustive maximization of EJS divergence renders maxEJS impractical in many applications. In a setting which subsumes this formulation, via a sequence of hierarchical queries on a decision tree for active hypothesis testing of anomaly/rare event detection, \cite{Cohen2015,Wang2018} proposed a random-walk based search strategy (referred to as IRW). We note that one of the proposed algorithm in this paper, $hiePM$, is strongly motivated by \cite{Cohen2015} and \cite{Wang2018}. The formulation in \cite{Wang2018} can be specialized to our setting where searching the dyadic intervals can be associated with a decision tree and observation statistics can be arranged on levels of the decision tree. However, the IRW algorithm and its analysis in \cite{Wang2018}, when specialize to our setting, fails to achieve an optimal acquisition rate. 

Non-adaptive optimal search strategies under size-dependent measurement noise were studied in \cite{Kaspi2014,Kaspi2015,Kaspi2018,Zhou2019} where \cite{Kaspi2014,Kaspi2018} used random coding based strategy with the input distribution (size of the query set) optimized for size-dependent measurement noise. \cite{Kaspi2015} extends the algorithm to multiple targets. \cite{Zhou2019} further provided finer analysis on the non-asymptotic bounds and extended the random coding based algorithm on general DMC.} The first optimal (in terms of rate-reliability) \textit{adaptive} search strategy was also proposed in \cite{Kaspi2014}, consisting of three phases of random search strategies; in this paper, we use the shorthand 3rand to refer to this algorithm. By allowing the second and third phases of the search to adapt to the outcome of the previous phase(s), the algorithm was shown to significantly outperform all non-adaptive strategies (in terms of both rate and reliability). This is in sharp contrast to the case of noisy search with measurement-independent noise where randomized non-adaptive searches are known to be asymptotically (rate) optimal\footnote{This is nothing but a manifestation of Shannon's original analysis establishing that feedback cannot increase the capacity of DMC}. While 3rand strategy is optimal in asymptotic sense, it suffers from three main shortcomings due to its essential reliance on random code constructions. Firstly, the algorithms computational complexity (decode/detection complexity) is rather prohibitive. Secondly, the query geometry is not constrained where the collection of possible query sets grows exponentially (in resolution).  Thirdly, by construction, 3rand does not fully utilize the profile of the noise statistics, resulting in a rather poor non-asymptotic performance with significant sensitivity to the choice of hyper parameters associated with the three phases of the algorithm.

\rv{
Our problem also can be viewed as a special case of noisy group testing \cite{Hwang1972,Baldassini2013,Chan2014,Hou2017,Scarlett2019,Kaspi2018} with the caveat where the noise depends on the size of the tested group. While our goal is to recover a single target, group testing in general concerns multiple target identification. For noisy non-adaptive group testing, we refer readers to \cite{Baldassini2013,Chan2014} where the asymptotically optimal rate and algorithms are relatively well-understood. In contrast, much less is known about adaptive group testing \cite{Hwang1972,Kaspi2015,Hou2017,Scarlett2019,Cuturi2020}, despite significant adaptivity gain \cite{Kaspi2015,Scarlett2019,Cuturi2020}. The author in \cite{Hwang1972} generalized the binary search to multiple target in noiseless setup, while \cite{Scarlett2019} addressed noisy adaptive group testing with measurement-independent noise where the main focus is the asymptotic performance of the proposed 2-stage and 3-stage algorithm under different scaling assumption of the number of targets vs number of population. Most relevantly, \cite{Kaspi2015} also generalized the 3rand algorithm to multiple targets with size-dependent measurement noise where a similar 3-stage adaptive algorithm is proposed with provable optimal asymptotic performance. The proposed algorithm in \cite{Kaspi2015} and \cite{Scarlett2019} thus reduces to the 3rand algorithm of \cite{Kaspi2014} when considering only a single target. However, the numerical comparison between our proposed algorithms $hiePM$ and $dyaPM$ vs 3rand for single target suggests that, those limited adaptive multi-stage algorithm with random coding in the first phase, falls short in the non-asymptotic performance. The same conclusion of non-asymptotic benefit of a fully adaptive algorithm, is also observed numerically by \cite{Cuturi2020} where the author employed a Monte Carlo simulated posterior with high-complexity optimization over all possible tests. Extending our low-complexity and deterministic algorithms to multiple targets, albeit beyond the scope of this paper, is an open but important area of research.

Finally we note that our work is also connected to the problem of sparse vector recovery with noisy measurement (also known as noisy Compressed Sensing \cite{Ji2008,Haupt2009,Jin2013,Malloy2014,Pal2015,Atia2012}), with the caveat of adaptive binary measurements \cite{Atia2012} and the objective of support recovery \cite{Jin2013,Pal2015} instead of the vector itself. Our work can, specifically, be viewed as an extension of the adaptive scheme of \cite{Haupt2009} and \cite{Malloy2014} with the caveat of boolen (binary) measurement matrix. In particular, we see a strong connection to the adaptive non-binary sensing algorithm of \cite{Malloy2014} with  multi-stage shrinkage of the sensing area.
}

\subsection{\rv{Our Contributions}}
In this paper, we study the problem noisy search under \rv{size-dependent measurement noise} with the following practical complexity constraints: \textit{query time complexity}, \textit{computational and memory complexity}, \textit{query geometry (the shape of the query set)} and \textit{query cardinality (cardinality of the collection of query sets that are allowed to be chosen from)}. In particular, we proposed three novel fully adaptive sequential search strategies: sorted Posterior Matching ($sortPM$), dyadic Posterior Matching ($dyaPM$) and hierarchical Posterior Matching ($hiePM$). We analyze these algorithms by quantifying the step-by-step extrinsic Jensen-Shannon divergence. In particular, $sortPM$ is shown to optimize the asymptotic time complexity; $dyaPM$ also achieves similar time complexity while ensuring that the agent can only query connected sets. Lastly, we design and analyze $hiePM$ by further limiting the query sets to be the sets restricted to those that can be represented as a decision tree, such as the bisection search set. This property of $hiePM$ means that the collection of query sets forms a hierarchical cover of the search area, and, hence, is of small cardinality; we also note that limiting the query sets makes $hiePM$ particularly suitable for applications such as beamforming in array processing that requires pre-construction and storage of the possible query sets \cite{Chiu2019}.

In addition to the desirable time complexity, all three strategies, we show, can be implemented with low computational and memory complexities. In particular, we provide an exponential improvement in computational and memory complexities of the proposed strategies over the search strategies in the literature of measurement-dependent noisy search (maxEJS and 3rand). Furthermore, through a set of practically motivated numerical examples, we show that all the proposed search strategies have superior non-asymptotic performance compared with that of 3rand. Notably, we demonstrate superior performance of $hiePM$ despite the lack of theoretical guarantee regarding its asymptotic optimality. 

\rv{
Finally, to better position our work, we provide the following highlights of comparison with existing works:
\begin{enumerate}[1.]
    \item Our proposed algorithms, derived from Posterior Matching, have superior performance compared with medianPM under size-dependent measurement noise (In contrast, medianPM without any modification is shown to be near-optimal under distance-dependent measurement noise \cite{Nowak2011})
    \item Our analysis of the posterior shrinkage over time under each of the proposed algorithms by the use of the sorted-coarse-binned log-likelihood and the nested log-likelihood closes the gap of rate optimality of the fully adaptive algorithm, $sortPM$ and $dyaPM$, under size-dependent measurement noise
    \begin{itemize}[-]
        \item[-] we note that the EJS analysis of \cite{Naghshvar2015,Naghshvar2013ISIT}, provided originally in the context of hypothesis testing and channel coding, does not extend the optimality of EJS or its variants to the problem of search with size-dependent measurement noise. See Appendix \ref{proof-overview} for more details
    \end{itemize}
    \item Our proposed algorithms are much more practical compared with maxEJS \cite{Naghshvar2015} and 3rand \cite{Kaspi2014,Kaspi2018} in terms of computational and memory complexity as well as non-asymptotic performance
    \item The rate and reliability of $hiePM$, though in general sub-optimal, is superior to other strategies over hierarchical query set \cite{Simsek2004,Cohen2015,Wang2018} 
    \begin{itemize}
        \item[-] in section \ref{sec:asym}, we specialize the finding from \cite{Wang2018} and show that our rate analysis of the proposed $hiePM$ has a significant improvement over the proposed algorithm IRW from \cite{Wang2018}
        \item[-] we establish the improved performance of $hiePM$ over the scheme proposed and analyzed in \cite{Simsek2004} under measurement-independent noise (since we intent to focus on size-dependent measurement noise in this paper, we refer readers to \cite{Chiu2018} for the comparison between $hiePM$ and the algorithm of \cite{Simsek2004})
    \end{itemize}
\end{enumerate}
}

\subsection{Notations} \label{sec:notation}
We use boldface letters to represent vectors. We write $\.\pi^{\downarrow}$ to denote sorted element of the vector $\.\pi$ in descending order, $i.e.$, $\pi^{\downarrow}_i$ represents the $i$th largest element of $\.\pi$. For a set of indices $S$, we write $\pi_S \equiv \sum_{i\in S} \pi_i$. We denote the space of probability mass functions on set $\mathcal{X}$ as $P(x)$. We denote the Kullback-Leibler (KL) divergence between distribution $P$ and $Q$ by $D(P \| Q) = \sum_x P(x) \log \frac{P(x)}{Q(x)}$. The mutual information between random variable $X$ and $Y$ is defined as $I(X;Y) = \sum_{x,y} p(x,y) \log \frac{p(x,y)}{p(x)p(y)}$, where $p(x,y)$ is the joint distribution, and $p(x)$ and $p(y)$ are the marginals of $X$ and $Y$. Let Bern$(p)$ denote the Bernoulli distribution with parameter $p$, and $I(q,p)$ denote the mutual information of the input $X \sim$ Bern$(q)$ and the output $Y$ of a BSC channel with crossover probability $p$. Let $C_1(p) := D( \text{Bern}(p) \| \text{Bern}(1-p))$. Let $\expect{ \cdot }$ denote the expectation. We use $|S_t|$ to represent the counting measure of a discrete set $S_t$ (cardinality of $S_t$). 


\section{Problem Setup}
\label{prelim}

We consider the problem of searching for a point target in a unit interval \rv{with a given resolution $\delta$. The target is uniformly placed over $1\slash\delta$ bins each of size $\delta$ in this unit interval.}  Without loss of generality, we assume that $1\slash\delta$ to be an integer. Let \rv{$\theta\in \{1,2,...,1\slash\delta\}$} be the index of the bin that contains the target. 

We wish to estimate $\theta$ by sequentially choosing (possible random) query sets $S_t \subseteq \{1,2,...,\frac{1}{\delta}\}$, $t=1,2,...,\tau$. The agent obtains a noisy observation $Y_t$ modelled as 
\begin{equation}
    Y_t = \mathds{1}( \theta \in S_t) \oplus Z_t(S_t),
\end{equation}
where $\oplus$ denotes exclusive OR operation, and $Z_t(S_t)$ is a Bernoulli noise random variable whose statistics depends on the query set $S_t$. In particular, we assume that $Z_t(S_t) \sim \text{Bern}(p(\delta|S_t|))$ where $p: (0,1) \to (0,\frac{1}{2})$ is a continuous and non-decreasing function. We assume that the noise is conditionally (conditioned on $S_t$) i.i.d. across time.

A sequential causal strategy selects random query set $S_t$ as a measurable random variable of the past decisions and observations $(S_1^{t-1},Y_1^{t-1})$, and makes a declaration of the estimate $\hat{\theta}$ at a stopping time $\tau$. After $\tau$ queries, the agent declares the target index $\hat{\theta}$. The search is said to have \textit{resolution} $\frac{1}{\delta}$ and \textit{reliability} $\epsilon$ if 
\rv{
\begin{equation}
\prob{  |\hat{\theta}_\tau \neq \theta|} \geq 1-\epsilon.
\end{equation}
}
We say a strategy is \textit{fixed-length} if stopping time $\tau$ is a deterministic time which is selected independently of the observation sequences $Y_1^{t-1}$; otherwise, we say it is \textit{variable-length}. We say a strategy is $\textit{non-adaptive}$ if the selection of $S_t$ is made independently of the realization of the past observations; this is in contrast to a strictly \textit{adaptive policy} where the selection of $S_t$ explicitly depends on the observation sequence $Y_1^{t-1}$ as well as the past decisions of query set $S_1^{t-1}$.

In this work, we consider the case where the decision has zero initial side information about the location of the target, hence, a uniform Bayesian prior $\pi_i(0) := \prob{\theta = i} = \delta$. By this Bayesian framework, the belief vector $\.\pi(t)$, where its $i_{th}$  component at time $t$ is given as 
\begin{equation}
    \pi_i(t) = \prob{ \theta = i  | Y_1^{t-1}, S_1^{t-1} },
\end{equation}
is a sufficient statistics. In other words, any deterministic stationary adaptive strategy can be denoted by a function
\begin{equation}
    \gamma: \Delta_{\delta} \to 2^{ \{1,2,...,  \frac{1}{\delta}\} }
\end{equation}
where $\Delta_{\delta}$ is the probability simplex of dimension $\frac{1}{\delta}$.  

We characterize the performance of search strategies by the following:
\begin{enumerate}[i)]
    \item \underline{\textit{The Query Time Complexity}}: \\
    We are interested in search strategies that can find the target location accurately (with resolution $\frac{1}{\delta}$) and reliability (with confidence $1-\epsilon$) as quickly as possible. We measure the asymptotic time complexity by how the (expected) number of queries, $\tau_{\epsilon, \delta}$, scales with the resolution $\frac{1}{\delta}$ and the reliability $\epsilon$. Additionally, we use the rate-reliability pair to capture the asymptotic query time complexity:
\begin{definition} \label{defS:rate-reliability}
    A family of search strategies $\gamma_{\epsilon,\delta} $ with resolution $\frac{1}{\delta}$, reliability $\epsilon$, and stopping time $\tau_{\epsilon,\delta}$ are said to achieve a maximum rate R and a maximum reliability E respectively if and only if
    \begin{equation}
    R= \lim_{\delta \rightarrow 0} \frac{ \log (\frac{1}{\delta}) }{\expect{\tau_{\epsilon,\delta}}}, \quad E = \lim_{\epsilon \rightarrow 0} \frac{ \log (\frac{1}{\epsilon}) }{\expect{\tau_{\epsilon,\delta}}}.
    \end{equation}
\end{definition}
    \item \underline{\textit{The Computational and Memory Complexity}}:  \\
    There are memory and computational requirements for computing the query set $S_t$ at every query time $t$, as well as computing the final estimate $\hat{\theta}$. Specifically, adaptive selection of $S_t$ requires updating the posterior vector. There is also the computation complexity associated with the mapping $\gamma$ from $\.\pi(t)$ to the next query set ${S}_{t+1}$.
    \item \underline{\textit{The Query Geometry and the Query Cardinality}}: \\
    In many practical settings, the choice of the query set $S_t$ cannot be arbitrary. Let $\mathcal{A} \subseteq 2^{\{1,2,...,\frac{1}{\delta}\}}$ be the set of allowable query sets, i.e. consider $S_t \in \mathcal{A} \subsetneq 2^{\{1,2,...,\frac{1}{\delta}\}}$. We evaluate the algorithms in terms of the \rv{query geometry} of sets in $\mathcal{A}$. One practically relevant choice of $\mathcal{A}$, motivated by the visual search \cite{Naghshvar2013ISIT} and initial beam alignment (\cite{Chiu2019}) applications, is when $\mathcal{A}  = \mathcal{I} := \{ \{ i: a \leq i \leq b \} : 1\leq a< b \leq \frac{1}{\delta} \}$, i.e. when the query sets are constrained to be contiguous intervals. In such a case, we say that the search strategy is with a connected/contiguous query geometry. In fact, we will see that the connected query geometry of $\mathcal{I}$ offers an immediate reduction of computational and memory complexity in tracking the posterior.
    
    Furthermore, the cardinality of $\mathcal{A}$ determines the memory footprint of the algorithm. A smaller query cardinality is favorable for applications where the construction of the query set itself is non-trivial and a an offline construction with a codebook-based approach is preferable (e.g. the beam alignment problem in \cite{Chiu2019}). Hence, we characterize the query cardinality of the algorithms as the cardinality of $\mathcal{A}$.
\end{enumerate}

To get an understanding of the importance of the query geometry, let us present the reduction of computational and memory complexity just by the constraint of connected query geometry. We see this through the following lemma: 
\begin{lemma} \label{lemmaS:simple}
For connected query geometry $S_n \in \mathcal{I} := \{ \{ i: a \leq i \leq b \} : 1\leq a< b \leq \frac{1}{\delta} \}$, $n=1,2,...,t$ with uniform prior $\pi_i(0)=\delta$ for all $i$, the posterior at time $t$ can be written as a simple function with at most $2t+1$ intervals. Specifically, there exist a sequence of disjoint partition of \rv{ $[\frac{1}{\delta}] = \cup_{u=0}^{2t} J^{(t)}_u$, $J^{(t)}_u \in \mathcal{I}$ such that
\begin{equation} \label{eqS:simple}
    \pi_i(t) = \sum_{u=0}^{2t} \frac{\pi_{J^{(t)}_u}}{|J^{(t)}_u|} \mathds{1}_{J^{(t)}_u}(i), \ t = 1,2,...
\end{equation}
where $\pi_S := \sum_{i\in S} \pi_i$ for a set $S$.}
\end{lemma}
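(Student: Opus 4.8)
The plan is to prove this by tracking how a single Bayesian update reshapes the posterior, exploiting the two hypotheses directly: the prior is uniform and every query set is a single contiguous interval. First I would write the posterior in product form via Bayes' rule,
\begin{equation*}
\pi_i(t) \;\propto\; \pi_i(0) \prod_{n} L_n(i), \qquad L_n(i) := \prob{Y_n | \theta = i, S_n},
\end{equation*}
where the product runs over the at most $t$ queries whose outcomes are conditioned on, and $\pi_i(0) = \delta$ is constant in $i$. The crucial observation is that each likelihood factor $L_n(i)$ depends on $i$ only through the indicator $\mathds{1}(i \in S_n)$: from the model $Y_n = \mathds{1}(\theta \in S_n) \oplus Z_n(S_n)$, the factor takes one value (either $p(\delta|S_n|)$ or $1 - p(\delta|S_n|)$, according to the realized $Y_n$) for every $i \in S_n$, and the complementary value for every $i \notin S_n$. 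Since $S_n = \{i : a_n \le i \le b_n\} \in \mathcal{I}$ is contiguous, $L_n(\cdot)$ is a simple function that is constant on $\{1,\dots,a_n-1\}$, on $\{a_n,\dots,b_n\}$, and on $\{b_n+1,\dots,\frac{1}{\delta}\}$, i.e. it introduces at most the two breakpoints $a_n$ and $b_n+1$.

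Next I would pool all breakpoints $\{a_n, b_n+1\}$ contributed by the queries. These amount to at most $2t$ points in $\{1,\dots,\frac1\delta\}$; sorting them partitions the index set into at most $2t+1$ maximal runs $J^{(t)}_0,\dots,J^{(t)}_{2t}$, each of which is a contiguous interval and hence lies in $\mathcal{I}$. On any such cell no query endpoint is crossed, so the membership pattern $(\mathds{1}(i\in S_1),\dots)$ — and therefore every factor $L_n(i)$ — is constant in $i$ across the cell. Combined with the constant prior $\pi_i(0)=\delta$, this forces the unnormalized posterior, and hence $\pi_i(t)$ after dividing by the (positive scalar) normalizer, to be constant on each $J^{(t)}_u$. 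Because $\.\pi(t)$ is a probability vector, the common value on $J^{(t)}_u$ must equal the average $\pi_{J^{(t)}_u}/|J^{(t)}_u|$, which is precisely the claimed representation~\eqref{eqS:simple}. (Equivalently, one can argue inductively: the prior is a single constant piece, and multiplying the current simple function by a two-breakpoint factor $L_n(\cdot)$ yields a product that is piecewise constant on the common refinement, adding at most two new breakpoints and keeping every cell contiguous; renormalization does not alter the partition, carrying the bound to at most $2t+1$ pieces.)

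The closest thing to an obstacle is not the counting itself but verifying that the two hypotheses are genuinely what make the clean form go through. The contiguity of each cell $J^{(t)}_u\in\mathcal{I}$ hinges on the \emph{connected} query geometry: refining by arbitrary (disconnected) sets would still bound the number of distinct posterior values, but the level sets could be unions of intervals rather than single intervals, so the $2t+1$-interval representation would fail even though a constant-value decomposition persists. Likewise, the \emph{uniform} prior is what makes $\pi_i(t)$ constant within a cell and equal to $\pi_{J^{(t)}_u}/|J^{(t)}_u|$; with a non-uniform prior the posterior could vary inside a cell and the averaged form would break, even though the $2t$-breakpoint bound would remain valid. I would therefore take care to state these two dependencies explicitly when writing the cell-refinement step.
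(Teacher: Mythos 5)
Your proof is correct and is in substance the same as the paper's: the paper's proof is just a pointer to Procedure~1 (sequential binning), which is exactly the inductive refinement you sketch in your parenthetical remark, while your main presentation pools all $\le 2t$ breakpoints $\{a_n, b_n+1\}$ at once and counts the resulting maximal runs. Both come down to the same observation that each contiguous query contributes at most two new interval endpoints and that the uniform prior makes the posterior constant on each resulting cell, so nothing is missing.
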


The proof of lemma \ref{lemmaS:simple} follows from Procedure \ref{proc:seqBayes}. An immediate and important implication is that the complexity of tracking the
posterior is of order $O(\tau)$ under the connected query geometry. \rv{ This is significant especially since for any algorithm with positive information rate (including all of our proposed algorithm as we will see) $\tau$ is of order $O(\log \frac{1}{\delta})$, implying the computational and memory complexity of $O(log \frac{1}{\delta})$}

\begin{procedure}[h!tb]  
\rv{\textbf{Notation}: $\mathcal{I}$ denotes a collection of intervals and $\.\pi_{\mathcal{I}} := (\pi_{J_1}, \pi_{J_2}, ... \pi{J_n})$ where $\mathcal{I}=\{J_1,...,J_n\}$}\;
\textbf{Input}: $(\.\pi_{\mathcal{I}^{(t)}}(t), \mathcal{I}^{(t)},S_{t+1},Y_{t+1})$ where $S_{t+1} = \{i: s_1 \leq i \leq s_2\}$\;
\textbf{Output}: $(\.\pi_{\mathcal{I}^{(t+1)}}(t+1), \mathcal{I}^{(t+1)})$ \;
\rv{Find $J^{(t)}_{t_1},J^{(t)}_{t_2}$ such that $s_1\in J^{(t)}_{t_1}$ and $s_2\in J^{(t)}_{t_2}$\;
\For{$0\leq u < t_1$}{
$J^{(t+1)}_u = J^{(t)}_u$, $\pi_{J^{(t+1)}_u}(t) = \pi_{J^{(t)}_u}(t)$ \;
}
$J^{(t+1)}_{t_1} =  [ \min J_{t_1}, s_1-1 ]$, $\pi_{J^{(t+1)}_{t_1}}(t) = \frac{|J^{(t+1)}_{t_1}|}{|J^{(t)}_{t_1}|} \pi_{J^{(t)}_{t_1}}(t)$ \;
$J^{(t+1)}_{t_1+1} =  [ s_1, \max J_{t_1} ] $, $\pi_{J^{(t+1)}_{t_1+1}}(t) = \frac{|J^{(t+1)}_{t_1+1}|}{|J^{(t)}_{t_1}|} \pi_{J^{(t)}_{t_1}}(t)$ \;
\For{$t_1 + 2 \leq u < t_2 + 1$}{
$J^{(t+1)}_u = J^{(t)}_{u-1}$, $\pi_{J^{(t+1)}_u}(t) = \pi_{J^{(t)}_{u-1}}(t)$ \;
}
$J^{(t+1)}_{t_2+1} = [\min J_{t_2}, s_2-1 ] $,  $\pi_{J^{(t+1)}_{t_2+1}}(t) = \frac{|J^{(t+1)}_{t_2+1}|}{|J^{(t)}_{t_2}|} \pi_{J^{(t)}_{t_2}}(t)$ \;
$J^{(t+1)}_{t_2+2} =  [ s_2, \max J_{t_2} ] $ , $\pi_{J^{(t+1)}_{t_2+2}}(t) = \frac{|J^{(t+1)}_{t_2+2}|}{|J^{(t)}_{t_2}|} \pi_{J^{(t)}_{t_2}}(t)$ \;
\For{$t_2 + 3 \leq u <\leq 2t+2$}{
$J^{(t+1)}_u = J^{(t)}_{u-2}$,  $\pi_{J^{(t+1)}_u}(t) = \pi_{J^{(t)}_{u-2}}(t)$ \;
}
\# Bayes' rule:
\begin{equation} \label{eqS:seqBayes}
\begin{aligned}
    &\pi_{J^{(t+1)}_u}(t+1) = \\
    &  \frac{ \pi_{J^{(t+1)}_{u}}(t)  \prob{Y_{t+1}| X_{t+1} = \mathds{1}( t_1 +1 \leq u \leq t_2 +1)  } }{\sum_{{u'}=0}^{2t+2}  \pi_{J^{(t+1)}_{u'}}(t)  \prob{Y_{t+1}| X_{t+1} = \mathds{1}(\theta \in J^{(t+1)}_{u'}) } }\;
\end{aligned}
\end{equation}}
\caption{1(Bayes' Rule with sequential binning)} 
\label{proc:seqBayes}
\end{procedure}


\section{Proposed Search Strategies and Main Results}
\label{secS:proposed_alg}

In this section, we introduce three proposed search strategies: sorted Posterior Matching ($sortPM$), dyadic Posterior Matching ($dyaPM$), and hierarchical Posterior Matching ($hiePM$). We give a summary of our main results in terms of rate (query time complexity), computational and memory complexity, as well as query geometry and cardinality in Table \ref{tabS:comp_all}. The three proposed algorithms are all variants of binary Posterior Matching proposed originally by Horstein \cite{Horstein63}, and later analyzed by \cite{Shayevitz11,Naghshvar2015,Li2014}. In this paper, we limit our attention to the deterministic variant of Posterior Matching which simply queries the bins to the left of posterior median. In other words,
\begin{equation}
    S_{t+1}^{(\text{PM})} = \gamma_{\text{PM}}(\.\pi_t) = \{i: i\leq k^*_{\text{PM}}\},
\end{equation}
where $k^*_{\text{PM}}$ is the bin index closet to the posterior median, $i.e.$ $k^*_{\text{PM}}=\argmin_k |\pi_{[1,k]}(t) - \frac{1}{2}|$.
\begin{remark}
By construction, $\prob{ \mathds{1}( \theta \in S_{t+1}) |Y_1^{t}} \approx \frac{1}{2}$. In other words, under Posterior Matching, $X_t$ has the desirable property of maximum (conditional) entropy. \rv{However, since the measurement noise, whose variance increases with the size, could be excessively large and so the information per query, $I( \mathds{1}( \theta \in S_{t+1}), Y_{t+1})$, can be significantly small. To address this, $sortPM$, which is proposed next, attempts to both ensure $\prob{ \mathds{1}( \theta \in S_{t+1}) |Y_1^{t}} \approx \frac{1}{2}$ and set $S_{t+1}$ to be of small cardinality.}
\end{remark}

{\renewcommand{\arraystretch}{1.3}
\begin{table*}[h!tb]  
    \centering
\begin{tabular}{ c| c |c | c| c |c | c }
& Asym. \# Query  & Reliability:  & Computation & Memory & Query & Query \\
& Rate: $\lim_{\delta}\frac{\log(1\slash\delta)}{\expect{\tau_{\epsilon,\delta}}} $ & $\lim_{\epsilon}\frac{\log(1\slash\epsilon)}{\expect{\tau_{\epsilon,\delta}}} $  & (each query) & Complexity & Geom. & Card. \\
\hline 
medianPM \cite{Horstein63}  & ${I(\frac{1}{2},p_{\max})} $ &${I(\frac{1}{2},p_{\max})} $ & $O(\log\frac{1}{\delta})$ &  $O(\log\frac{1}{\delta})$ & conn. & $O(\frac{1}{\delta})$ \\
\hline 
maxEJS \cite{Naghshvar2013ISIT}  & ${I(\frac{1}{2},p_{\min})}$ & $C_1(p[\delta])$  & $O(2^{\frac{1}{\delta}})$ & $O(\frac{1}{\delta})$ & disj. & $O(2^{\frac{1}{\delta}})$\\
\hline 
3rand \cite{Kaspi2018}  & ${I(\frac{1}{2},p_{\min})}$  & $C_1(p[\delta])$ & $O(\frac{1}{\delta})$\footnotemark[1] & $O(\frac{1}{\delta}) $  & disj. & $O(2^{\frac{1}{\delta}})$  \\
\hline 
$sortPM$  & ${I(\frac{1}{2},p_{\min})}$  [Thm\ref{thmS:sortPM}]  & $C_1(p[\delta])$ & $O(\frac{1}{\delta}\log\frac{1}{\delta})$ \footnotemark[2]  & $O(\frac{1}{\delta})$ & disj. & $O(2^{\frac{1}{\delta}})$\\
\hline 
$dyaPM$ & ${I(\frac{1}{2},p_{\min})}$ [Thm\ref{thmS:dyaPM}]  & $C_1(p[\delta])$ & $O(\log\frac{1}{\delta})$  & $O(\log\frac{1}{\delta})$ & conn. & $O((\frac{1}{\delta})^2)$\\
\hline 
$hiePM$ & ${I(\frac{1}{3},p_{\min})}$ [Thm\ref{thmS:hiePM}]  & $C_1(p[\delta])$ & $O(\log\frac{1}{\delta})$ & $O(\log\frac{1}{\delta})$ & conn. & $O(\frac{1}{\delta})$\\
\end{tabular}

\caption{Comparisons between different search strategies and Main results}
\rv{ where $C_1(\cdot)$ and $I(\cdot,\cdot)$ are defined in section \ref{sec:notation}, and $p_{\max}:=\max_S p[|S|] = p[1\slash 2]$ and $p_{\min}:=\min_S p[|S|] = p[\delta]$ are the worst and best possible noise model.}
    \label{tabS:comp_all}
\end{table*}
}
\footnotetext[2]{The complexity of random search strategy has to account for the complexity of the optimal decoder which requires the tracking of the posterior vector. \rv{We acknowledge that sub-optimal strategies can be utilized to trade-off memory and complexity.}}
\footnotetext[3]{A more sophisticated implementation via sequential quantization of $sortPM$ is possible , where both memory and computational complexity can be reduced to be of order $O((\log\frac{1}{\delta}) (\log\log\frac{1}{\delta}) )$}

\subsection{Sorted Posterior Matching} 


Under Sorted Posterior Matching ($sortPM$) strategy, the posterior matching step is preceded by a sorting operation on the posterior vector. In particular, consider the sorted posterior $\.\pi^{\downarrow}(t)$ obtained via the corresponding sorting operation $\sigma_t$: $\pi_{i}(t) \equiv \pi^{\downarrow}_{\sigma_t(i)} (t)$. Let $k^* = \argmin_{k} | \pi^{\downarrow}_{[1,k]}(t) - \frac{1}{2} |$. Under $sortPM$,
\begin{equation}
    S_{t+1} = \gamma_s(\.\pi(t)) =  \{  i: \sigma_t(i) \in [1,k^*] \},
\end{equation}
is queried.

\begin{algorithm}[h!tb] 
 \textbf{Input}: resolution $\frac{1}{\delta}$, error probability $\epsilon$, fixed stopping time $n$, \textit{stopping-criterion}\\
 \textbf{Output}: estimate of the target location $\hat{\theta}$ after $\tau$ queries\\
 \textbf{Initialization}: $\pi_i(0) = \delta$ for all $i=1,2,...,1\slash \delta$, \\
 \For{$t=0,1,...$ }{
    \# Design the search region by sorted posterior \\[-2mm]
    \begin{gather}
    \begin{aligned}
    k^* &= \argmin_{k} | \pi^{\downarrow}_{[1,k]}(t) - 1\slash 2 |\\
    S_{t+1} &= \gamma_s(\.\pi(t)) =  \{  i: \sigma_t(i) \in [1,k^*] \},
    \end{aligned}
    \end{gather}

    \setlength{\abovedisplayskip}{0pt} \setlength{\abovedisplayshortskip}{0pt}
    \# Take next measurement \\
     $Y_{t+1} = \mathds{1}(\theta\in S_{t+1}) \oplus Z_{t+1}$\\
    \# Posterior update by Bayes' Rule    \\
     $
         \.\pi(t+1) \leftarrow Y_{t+1}, \.\pi(t) 
     $ \\
    \# Stopping criteria \\
    \textit{case: stopping-criterion} = fixed length (FL)\\
    \If{ $t+1=n$}{
    break\;}
    
    \textit{case: stopping-criterion} = variable length (VL)\\
    \If{ $ \max_i  \pi_i(t+1) > 1-\epsilon$}{
    break\;}
}
$\tau = t+1$  (length of the search)\\
$\hat{\theta} =  \argmax_i \pi_i(\tau)$\\
\caption{Sorted Posterior Matching}
\label{algS:sortPM}
\end{algorithm}

\begin{theorem} \label{thmS:sortPM}
The expected search time of $sortPM$ of achieving resolution $\delta>0$ and reliability $0<\epsilon<1$ can be upper bounded by 
\begin{equation}
    E[\tau_{\epsilon,\delta} ] \leq   \frac{\log (1\slash \delta)}{I(1\slash 2, p[\alpha])} + \frac{\log (1\slash \epsilon)}{C_1(p[\delta])}  + \rv{ o(\log \frac{1}{\delta \epsilon}) },
\end{equation}
for any fixed $\alpha >  (e\log\frac{1}{\delta \epsilon})^{-K_s}$, where 
\rv{
\begin{equation}
    K_s  := 
\max \[\{  \frac{1}{2} D \[( \frac{1}{4} B_1 + \frac{3}{4} B_0 \Big\| B_0 \])  ,   \frac{1}{8} D\[( B_1 \Big \| \frac{3}{4} B_1 + \frac{1}{4} B_0 \]) \]\} > 0,
\end{equation}
$B_1 = \text{Bern}(1-p[1\slash 2])$, $B_0 = \text{Bern}(p[1\slash 2])$ and $D \[( \cdot \| \cdot \])$ is the KL divergence defined in Sec. \ref{sec:notation}.
}
\end{theorem}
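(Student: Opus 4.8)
The plan is to run a two-regime Lyapunov (drift) analysis on the log-likelihood ratio of the true bin, $U(t) := \log\frac{\pi_\theta(t)}{1-\pi_\theta(t)}$, preceded by a short concentration (burn-in) argument that keeps the sorted query set small. Since the prior is uniform, $U(0) = \log\frac{\delta}{1-\delta} \approx -\log(1/\delta)$, and under the variable-length rule the search halts once $\max_i \pi_i(t) > 1-\epsilon$, which, when the true bin is the MAP estimate, means $U(t) > \log\frac{1-\epsilon}{\epsilon} \approx \log(1/\epsilon)$; thus in expectation $U$ must traverse an increment of about $\log\tfrac{1}{\delta\epsilon}$. I would obtain $\expect{\tau_{\epsilon,\delta}}$ by lower-bounding the conditional one-step drift $\expect{U(t+1)-U(t)|\mathcal{F}_t}$ in each regime and then applying an optional-stopping (Wald-type) identity to convert the per-step progress into expected crossing times.

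The crux, and the genuinely new ingredient, is bounding the cardinality $|S_{t+1}|$ of the $sortPM$ query, since this is what pins the noise level $p[\delta|S_{t+1}|]$ through monotonicity of $p$. Because $sortPM$ adds bins in decreasing posterior order only until mass $\tfrac12$ is covered, $|S_{t+1}|$ is exactly the number of largest posterior entries needed to reach $\tfrac12$. I would track this through the \emph{sorted coarse-binned log-likelihood}, grouping bins by posterior magnitude and controlling how fast non-target bins are expelled from the top group. Using large-deviation (Chernoff) estimates under the \emph{worst-case} channel $\text{BSC}(p[1/2])$ — the two competing failure modes produce the two candidate exponents $\tfrac12 D(\tfrac14 B_1 + \tfrac34 B_0 \| B_0)$ and $\tfrac18 D(B_1 \| \tfrac34 B_1 + \tfrac14 B_0)$, whose maximum is $K_s$ — I would show that after a burn-in of only $O(\log\log\tfrac{1}{\delta\epsilon})$ steps the surviving fraction of top-mass bins is at most any $\alpha > (e\log\tfrac{1}{\delta\epsilon})^{-K_s}$. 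Crucially $\log\log\tfrac{1}{\delta\epsilon} = o(\log\tfrac1\delta)$, so the burn-in is absorbed into the $o(\log\tfrac{1}{\delta\epsilon})$ remainder, and thereafter $p[\delta|S_{t+1}|] \le p[\alpha]$.

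With the cardinality controlled the two drift bounds are routine. In the \emph{acquisition} regime ($\pi_\theta(t) < \tfrac12$), posterior matching gives $\prob{\theta\in S_{t+1}|\mathcal{F}_t}\approx\tfrac12$, so $X_{t+1}=\mathds{1}(\theta\in S_{t+1})$ is essentially $\text{Bern}(\tfrac12)$ observed through $\text{BSC}(p[\alpha])$, and the averaged (over the posterior) log-likelihood increment is at least $I(\tfrac12,p[\alpha]) - o(1)$, yielding acquisition time $\approx \log(1/\delta)/I(\tfrac12,p[\alpha])$. In the \emph{confirmation} regime ($\pi_\theta(t) > \tfrac12$) the MAP bin already carries more than half the mass, so $sortPM$ queries essentially the single target bin at the best noise $p[\delta]$, giving an increment of $U$ at least $C_1(p[\delta]) - o(1)$ and confirmation time $\approx \log(1/\epsilon)/C_1(p[\delta])$. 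Summing the two regime times with the negligible burn-in gives the claimed bound.

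The main obstacle is precisely the cardinality/concentration step: one must rule out persistent excursions in which several non-target bins simultaneously accumulate high posterior and inflate $|S_{t+1}|$ (hence the noise) during acquisition. This is exactly where the generic EJS drift of \cite{Naghshvar2015,Naghshvar2013ISIT} is insufficient — it bounds the growth of $U(t)$ but says nothing about how many bins compete for the top half of the posterior — so the sorted coarse-binned log-likelihood and its worst-case exponent $K_s$ are needed to certify that, outside the $o(\log\tfrac1\delta)$-length burn-in, the effective noise is $p[\alpha]$ rather than $p[1/2]$.
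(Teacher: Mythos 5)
Your overall architecture matches the paper's: a two-regime EJS/drift analysis of the average log-likelihood (Fact~\ref{factS:EJSthm} together with Lemma~\ref{factS:EJSori_sort_dya}), combined with a control of the query-set size via the sorted coarse-binned log-likelihood $U_\alpha(t)$ of (\ref{eqS:alpha_U}), whose drift lower bound is exactly your constant $K_s$ (Lemma~\ref{lemmaS:EJSsortPM}) and whose level-crossing tail is obtained from Azuma's inequality (Lemma~\ref{lemmaS:azuma}). Your identification of the two case-dependent divergence terms whose maximum is $K_s$, and of the burn-in length $O(\log\log\frac{1}{\delta\epsilon})$ tied to the hypothesis $\alpha > (e\log\frac{1}{\delta\epsilon})^{-K_s}$, is also correct.

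The genuine gap is in the sentence ``thereafter $p[\delta|S_{t+1}|]\le p[\alpha]$.'' What the concentration step actually delivers is only a tail bound $\prob{\delta|S_{t+1}|>\alpha}\le k_s e^{-tE_0}$ for $t$ beyond the burn-in: the event that non-target coarse bins re-accumulate mass and inflate $|S_{t+1}|$ has positive probability at \emph{every} time, so the drift lower bound $I(1\slash 2,p[\alpha])$ cannot be asserted pathwise after the burn-in, and a single Wald/optional-stopping pass does not go through as you state it. Closing this requires the decomposition of Lemma~\ref{lemma:total_prob}: split $\expect{\tau_{\epsilon,\delta}}$ over the bad events $E_t=\{\delta|S_{t+1}|>\alpha\}$ and their complement, exploit the time-homogeneity of the Markov chain $\.\pi(t)$ to restart the EJS time bound with the worst-case rate $R(1\slash 2)$ on each $E_t$, and sum the resulting geometric series so that the bad events contribute only $o(\log\frac{1}{\delta\epsilon})$. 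A secondary imprecision: $K_s$ is the \emph{drift} lower bound of the submartingale $U_\alpha(t)$, not a Chernoff exponent of two ``failure modes''; the actual tail exponent from Azuma is $K_s^2\slash(2(B_\alpha+K_s)^2)$, and $K_s$ enters the hypothesis on $\alpha$ only through the threshold time $\log(1\slash\alpha)\slash K_s$ after which the tail bound is valid.
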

\begin{proof}
See Appendix \ref{proofS:sortPM}.
\end{proof}

\begin{remark} \label{remarkS:sortPM_asym}
By first taking $\delta\rightarrow 0$ and then $\alpha \rightarrow 0$, Theorem~\ref{thmS:sortPM} together with the corresponding converse theorem [Theorem~1 in~\cite{Naghshvar2013ISIT}] implies that $sortPM$ achieves the best possible acquisition rate $I(1\slash 2, p_{\min})$ and the best reliability exponent $C_1(p[\delta])$ (by taking $\epsilon \rightarrow 0$).
\end{remark}

\begin{remark}
Even though $sortPM$, as well as prior works such as $maxEJS$ \cite{Naghshvar2013ISIT} and 3-phase random search \cite{Kaspi2018}, are asymptotically optimum in time complexity under measurement-dependent noise, they, in general, do not admit any constraint on the query set they choose. In other words, the query cardinality of these algorithms are of a prohibitive order $O(2^{\frac{1}{\delta}})$. Furthermore, the unconstrained query geometry prevents the applicability to many applications where connected query set or other specific geometry is preferred  (such as visual search \cite{Naghshvar2013ISIT}). $HiePM$, described next, restricts the query set.
\end{remark}

\subsection{Hierarchical Posterior Matching}
Motivated by many applications where disconnected inspecting regions is often infeasible, here we proposed a novel low-complexity search strategy which we call Hierarchical Posterior Matching, $hiePM$. $HiePM$ utilizes the hierarchical query geometry that is used in the noiseless binary search. For the brevity of presentation, we assume that $\frac{1}{\delta} = 2^L$ for some $L>0$. The hierarchical query geometry is therefore written as 
$\mathcal{H} = \{ H_l^m : l=0,1,2,..., m=0,1,2,...,2^l-1 \}$ where $H_l^m = \{ m 2^{L-l} +1 , m 2^{L-l} +2, ... , (m+1) 2^{L-l}  \}$. This query geometry, as shown in Fig. \ref{figS:dyaPM}, can be represented by a binary tree recursively by 
\begin{equation} \label{eqS:binaryTree}
    H^m_l = H^{2m}_{l+1} \cup H^{2m+1}_{l+1}, \ \ l=0,1,2,3,..., L.
\end{equation}

\begin{figure}
    \centering
    \includegraphics[width=0.8\textwidth]{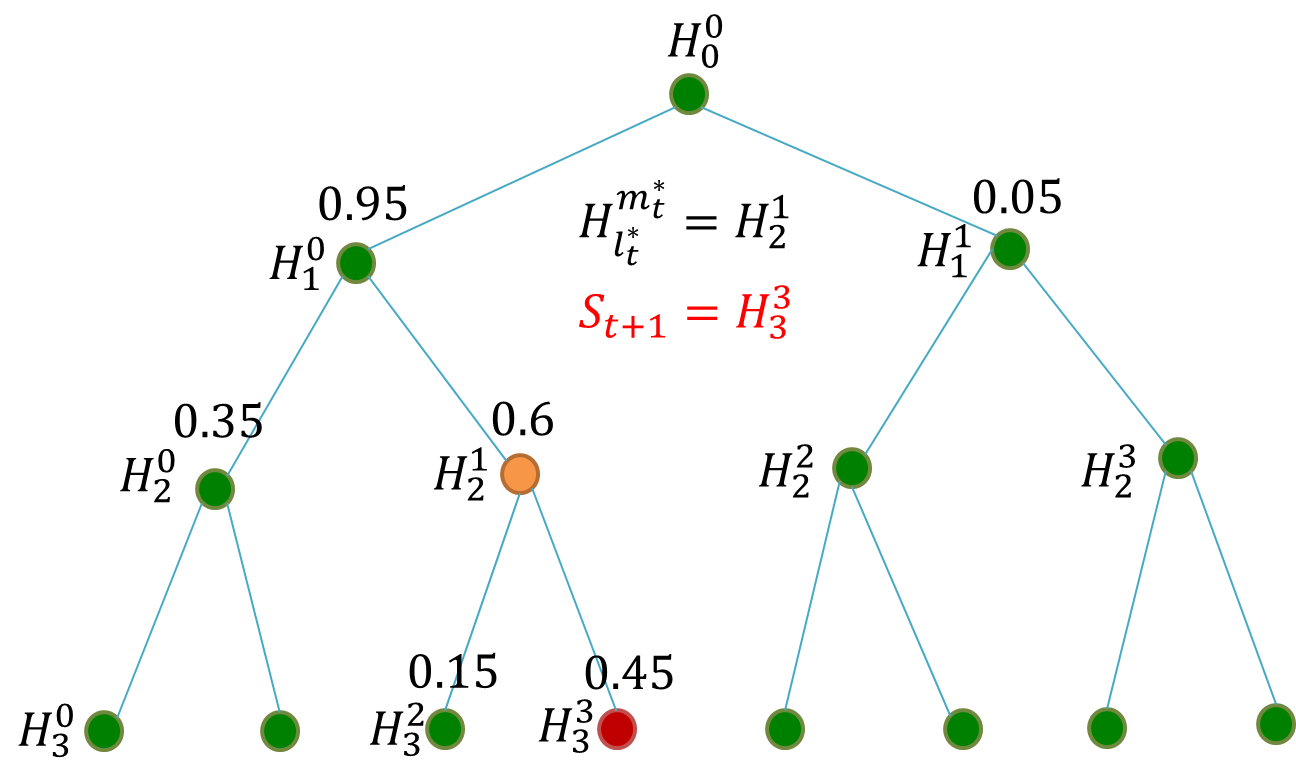}
    \caption{Binary search tree and the posterior for a given time $t$}
    \label{figS:dyaPM}
\end{figure}

By ensuring that $S_t \in \mathcal{H} \subseteq \mathcal{I}$, $hiePM$ ensures a connected query geometry. More precisely, $hiePM$ applies the Posterior Matching hierarchically along the binary tree as follows. Let
\begin{equation} \label{eqS:mass}
\begin{aligned}
    l^*_t &= \argmax_l \[\{ \max_m \pi_{H^m_l}(t) \geq \frac{1}{2}  \]\}, \\
    m^*_t &= \argmax_m \pi_{H^m_{l^*_t}}(t),
\end{aligned}
\end{equation}
and the hierarchical posterior matching with
\begin{equation} 
    \begin{aligned}
               (l_{t+1},m_{t+1}) = \argmin_{(l',m') \in \{(l_t^*,m^*_t),(l_t^*+1,2m^*_t),(l_t^*+1,2m^*_t+1)\}} \[|\pi_{H^{m'}_{l'}}(t) - \frac{1}{2}\]|.
    \end{aligned}
\end{equation}
In other words, $(l_{t+1},m_{t+1})$ identifies the node with the posterior closet to $\frac{1}{2}$. As such, querying $S_{t+1} = H_{l_{t+1}}^{m_{t+1}}$ ensures a high conditional entropy, while the size of the set is kept small to ensure near optimal time complexity: (See Algorithm~\ref{algS:hiePM} for more details on the construction of $hiePM$).

\begin{algorithm}[h!tb] 
 \textbf{Input}: resolution $\frac{1}{\delta}=2^L$, error probability $\epsilon$, fixed stopping time $n$, \textit{stopping-criterion}\\
 \textbf{Output}: estimate of the target location $\hat{\theta}$ after $\tau$ queries\\
\textbf{Initialization}: $\.\pi_{ \mathcal{I}^{(0)} }(0) = 1$, $\mathcal{I}^{(0)} = \{ (1,2,...,2^L) \} $ \\
 \For{$t=1,2,...$}{
    $l^*_t = \argmax_l \[\{ \max_m \pi_{H^m_l}(t) \geq \frac{1}{2}  \]\} $\;
    $m^*_t = \argmax_m \pi_{H^m_{l^*_t}}(t) $\;
    \# Match half posterior along the hierarchy $l$
        \begin{equation} 
    \begin{aligned}
               &(l_{t+1},m_{t+1}) = \\
               &\argmin_{(l',m') \in \{(l_t^*,m^*_t),(l_t^*+1,2m^*_t),(l_t^*+1,2m^*_t+1)\}} \[|\pi_{H^{m'}_{l'}}(t)  - \frac{1}{2}\]| ;
    \end{aligned}
\end{equation}
$S_{t+1} = H^{m_{t+1}}_{l_{t+1}}$\;
    \setlength{\abovedisplayskip}{0pt} \setlength{\abovedisplayshortskip}{0pt}
    \# Take next measurement \\
     $Y_{t+1} = \mathds{1}(\theta\in S_{t+1}) \oplus Z_{t+1}$\\
    \# Posterior update by Bayes' Rule (Procedure \ref{proc:seqBayes})   \\
$(\.\pi_{\mathcal{I}^{(t+1)}}(t+1), \mathcal{I}^{(t+1)}) \leftarrow (\.\pi_{\mathcal{I}^{(t)}}(t), \mathcal{I}^{(t)},S_{t+1},Y_{t+1})$ \;
    \# Stopping criteria \\
    \textit{case: stopping-criterion} = fixed length (FL)\\
    \If{ $t+1=n$}{
    break\;}
    
    \textit{case: stopping-criterion} = variable length (VL)\\
    \If{ $ \max_i  \pi_i(t+1) > 1-\epsilon$}{
    break\;}
}
$\tau = t+1$  (length of the search)\\
$\hat{\theta} = \argmax_i \pi_i(\tau)$\\
\caption{Hierarchical Posterior Matching}
\label{algS:hiePM}
\end{algorithm}

\begin{theorem} \label{thmS:hiePM}
The expected search time of $hiePM$ for achieving resolution $\delta>0$ and reliability $0<\epsilon<1$ can be upper bounded by 
\begin{equation}
    E[\tau_{\epsilon,\delta} ] \leq   \frac{\log (1\slash \delta)}{I(1\slash 3, p[2^{-l}])} + \frac{\log (1\slash \epsilon)}{C_1(p[\delta])}  + \rv{ o(\log \frac{1}{\delta \epsilon}) },
\end{equation}
for any fixed $l > 0$ such that $ 2^{-l} >  (e\log\frac{1}{\delta \epsilon})^{-K_h}$, where 
\rv{
\begin{equation}
K_h := \min \Bigg\{  I \Big(\frac{1}{3},p[\frac{1}{2}] \Big), \frac{2}{3} D\Big(\frac{1}{3} B_1 + \frac{2}{3} B_0  \Big\| B_0 \Big) \Bigg\} > 0,    
\end{equation}
$B_1 = \text{Bern}(1-p[1\slash 2])$, $B_0 = \text{Bern}(p[1\slash 2])$ and $D \[( \cdot \| \cdot \])$ is the KL divergence defined in Sec. \ref{sec:notation}.
}
\end{theorem}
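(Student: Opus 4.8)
The plan is to bound $E[\tau_{\epsilon,\delta}]$ by controlling the growth of a confidence statistic along the lines of the Extrinsic Jensen--Shannon (EJS) drift analysis, but adapted to the size-dependent noise and the hierarchical query geometry through a \emph{nested log-likelihood}. Concretely, I would track, for each level $l$, the log-odds $\Lambda_l(t) = \log \frac{\pi_{H^{m_l(\theta)}_l}(t)}{1 - \pi_{H^{m_l(\theta)}_l}(t)}$ of the unique level-$l$ node $H^{m_l(\theta)}_l$ that contains the true target $\theta$; these are nested and increasing in expectation, with $\Lambda_L(t) = \log\frac{\pi_\theta(t)}{1-\pi_\theta(t)}$ being the confidence that triggers the variable-length rule $\max_i \pi_i > 1-\epsilon$. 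That stopping rule makes the reliability guarantee automatic (a posterior mass exceeding $1-\epsilon$ on $\hat\theta$ gives conditional error below $\epsilon$), so the work is entirely in the expected time. The bound then follows from the standard decomposition in which $\Lambda_L$ must climb by $\approx\log\frac1\delta$ during a \textbf{rate phase} (posterior spread out) and by a further $\approx\log\frac1\epsilon$ during a \textbf{reliability phase} (posterior concentrated), at per-step drifts equal to the two denominators in the claimed bound.

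The first ingredient is a purely structural lemma: the query mass is always bounded away from $0$ and $1$. Writing $(l^*_t,m^*_t)$ for the deepest node of posterior mass $\geq\frac12$ and noting that both of its children then have mass $<\frac12$, a short case analysis over the three candidates $\{(l^*_t,m^*_t),(l^*_t+1,2m^*_t),(l^*_t+1,2m^*_t+1)\}$ shows that the selected $S_{t+1}=H^{m_{t+1}}_{l_{t+1}}$ satisfies $\pi_{S_{t+1}}(t)\in[\frac13,\frac23]$: if the parent mass exceeds $\frac23$, then its larger child lies in $(\frac13,\frac12)$ and is strictly closer to $\frac12$. Since $q\mapsto I(q,p)$ is symmetric about $\frac12$ and concave, this pins the worst-case per-query information to $I(\frac13,p[2^{-l_{t+1}}])$, which explains the $\frac13$ in the rate denominator. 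In the reliability phase, once a single bin carries mass $>\frac12$ we have $l^*_t=L$, the candidate set collapses to that bin, so $S_{t+1}$ has fractional size $\delta$ and the per-step drift of $\Lambda_L$ conditioned on the bin being correct equals $D(\text{Bern}(1-p[\delta])\|\text{Bern}(p[\delta]))=C_1(p[\delta])$, giving the second term.

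The crux, and the main obstacle, is controlling the \textbf{descent of the query level} during the rate phase, because here the size-dependent noise makes the drift level-dependent: large near-root sets incur the worst noise $p[\frac12]$ and a small rate, and the EJS analysis of \cite{Naghshvar2015,Naghshvar2013ISIT} does not by itself certify that the algorithm spends negligible time on such sets. Using the nested log-likelihoods, I would show that the coarse levels concentrate at a drift lower-bounded by the two quantities appearing in $K_h$ --- the worst-case information $I(\frac13,p[\frac12])$, and the KL term $\frac23 D(\frac13 B_1+\frac23 B_0\|B_0)$ that guarantees positive net drift of the coarse-level log-odds on the event that the current level-$l$ node is correct, under the worst-case (largest-set) noise --- so that the active query level reaches depth $l$, i.e. the query size drops to $2^{-l}$, within $O(l)$ expected steps plus a rare-excursion correction controlled by a Chernoff bound with exponent $K_h$. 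The condition $2^{-l}>(e\log\frac1{\delta\epsilon})^{-K_h}$ is exactly what makes this descent time $O(\log\log\frac1{\delta\epsilon})=o(\log\frac1{\delta\epsilon})$, after which every query has size at most $2^{-l}$ and thus rate at least $I(\frac13,p[2^{-l}])$.

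Finally I would assemble the phases via an optional-stopping/Wald argument applied to $\Lambda_L(t)$ minus its compensator: lower-bounding the compensator by $I(\frac13,p[2^{-l}])$ while $\pi_\theta<\frac12$ and by $C_1(p[\delta])$ afterwards, and upper-bounding the total required increment of $\Lambda_L$ by $\log\frac1\delta+\log\frac1\epsilon$ (the prior contributes $\log\frac1\delta$ and the threshold $\log\frac{1-\epsilon}\epsilon$), yields the two main terms, with the descent-time bound and the standard overshoot absorbed into $o(\log\frac1{\delta\epsilon})$. The delicate points I expect to fight with are (i) the ``wrong MAP bin'' event in the reliability phase, so that occasional backward drift does not inflate the expectation, which I would dominate by the same $C_1$-type large-deviation estimate, and (ii) making the descent/excursion bound uniform enough in $\delta,\epsilon$ to legitimately hide it in the $o(\cdot)$ term, which is precisely where the form of $K_h$ is forced.
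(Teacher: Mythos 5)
Your proposal matches the paper's proof in all essential respects: the $[1/3,2/3]$ structural bound on the selected query mass yielding the $I(1/3,\cdot)$ rate, the nested coarse-level log-likelihood as a submartingale with drift at least $K_h$, an Azuma/Chernoff tail on the event that the query size exceeds $2^{-l}$, and a total-probability assembly that absorbs the descent/excursion time into the $o(\log \frac{1}{\delta \epsilon})$ term under the stated condition on $l$. The only cosmetic difference is that you track the log-odds of the level-$l$ node containing the true target, whereas the paper works with the posterior-averaged log-likelihood $U^{\{l\}}(t)$ whose drift is the EJS divergence; these are related by averaging over $\theta$ and lead to the same bounds.
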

\begin{proof}
See Appendix \ref{proofS:hiePM_dyaPM}.
\end{proof}
\begin{remark}
As shown in Algorithm \ref{algS:hiePM}, both the computational and memory complexity of $hiePM$ are dominated by tracking the posterior representation $\.\pi_{\mathcal{I}^{(t)}}, \mathcal{I}^{(t)}$ in Procedure \ref{proc:seqBayes}. This together with Theorem \ref{thmS:hiePM} implies that the computational and memory complexity is of order $O(\log \frac{1}{\delta})$.
\end{remark}

\begin{remark}
The hierarchical query geometry $\mathcal{H}$ not only is connected but also is of a hierarchical structure, which is suitable for the applications such as heavy hitter detection in networking \cite{Wang2018} (monitoring pre-fix IP addresses) and bit-wise coding \cite{Chiu2018}. Furthermore, the query cardinality is only $|\mathcal{H}| = O(\frac{1}{\delta})$, rendering $hiePM$ a great candidate for beamforming applications \cite{Chiu2019}, where the query set is stored in memory
\end{remark}

\begin{remark}
Taking $\epsilon \rightarrow 0$, we see that $hiePM$ achieves the best possible error exponent $C_1(p_{\min})$. On the other hand, the achievable acquisition rate of $hiePM$ by Theorem \ref{thmS:hiePM} is $I(1\slash 3, p_{\min})$ which is slightly smaller but close to the capacity $I(1\slash 2, p_{\min})$. This, we believe, is a byproduct of our analysis that loosely bounds the posterior distribution of $\mathcal{X}$ when we restrict the query area $S_{t+1}$ to $\mathcal{H}$.
\end{remark}

\subsection{Dyadic Posterior Matching}
By using the hierarchical query $\mathcal{H}$, $hiePM$ gives a solution that allows for constraints on the connectedness of query geometry. To ensure the optimality in time complexity, we proposed another low-complexity search strategy which we call dyadic Posterior Matching, $dyaPM$. 


By the same procedure as in $hiePM$, $dyaPM$ first finds the smallest binary interval that contains more than half posterior, i.e. $H^{m^*_t}_{l^*_t} = \{ m^*_t 2^{L-l^*_t} +1,m^*_t 2^{L-l^*_t} +2,..., (m^*_t +1) 2^{L-l^*_t} \}$ as in equation (\ref{eqS:mass}). The $dyaPM$ algorithm then applies the Posterior Matching within $H^{m^*_t}_{l^*_t}$ by potentially appending/exclusing additional bins:
\begin{equation}
    S_{t+1} = [m^*_t 2^{L-l^*_t} +1,k^*]
\end{equation}
where $k^* = \argmin_{k} | \pi_{[m^*_t 2^{L-l^*_t} +1,k]}(t) - 1\slash 2 |$ (The whole procedure of $dyaPM$ is summarized in Algorithm~\ref{algS:dyaPM}).

\begin{algorithm}[h!tb] 
 \textbf{Input}: resolution $\frac{1}{\delta}$, error probability $\epsilon$, fixed stopping time $n$, \textit{stopping-criterion}\\
 \textbf{Output}: estimate of the target location $\hat{\theta}$ after $\tau$ queries\\
 \textbf{Initialization}: $\.\pi_{ \mathcal{I}^{(0)} }(0) = 1$, $\mathcal{I}^{(0)} = \{ (1,2,...,2^L) \} $ \\
 \For{$t=1,2,...$}{
    $l^*_t = \argmax_l \[\{ \max_m \pi_{H^m_l}(t) \geq \frac{1}{2}  \]\} $\;
    $m^*_t = \argmax_m \pi_{H^m_{l^*_t}}(t) $\;
$k^* = \argmin_{k} | \pi_{[m^*_t 2^{L-l^*_t} +1,k]}(t) - 1\slash 2 |$\;
$S_{t+1} = [m^*_t 2^{L-l^*_t} +1,k^*]$\;

    \setlength{\abovedisplayskip}{0pt} \setlength{\abovedisplayshortskip}{0pt}
    \# Take next measurement \\
     $Y_{t+1} = \mathds{1}(\theta\in S_{t+1}) \oplus Z_{t+1}$\\
    \# Posterior update by Bayes' Rule (Procedure \ref{proc:seqBayes})   \\
$(\.\pi_{\mathcal{I}^{(t+1)}}(t+1), \mathcal{I}^{(t+1)}) \leftarrow (\.\pi_{\mathcal{I}^{(t)}}(t), \mathcal{I}^{(t)},S_{t+1},Y_{t+1})$ \;
    \# Stopping criteria \\
    \textit{case: stopping-criterion} = fixed length (FL)\\
    \If{ $t+1=n$}{
    break\;}
    
    \textit{case: stopping-criterion} = variable length (VL)\\
    \If{ $ \max_i  \pi_i(t+1) > 1-\epsilon$}{
    break\;}
}
$\tau = t+1$  (length of the search)\\
$\hat{\theta} = \argmax_i \pi_i(\tau)$\\
\caption{Dyadic Posterior Matching}
\label{algS:dyaPM}
\end{algorithm}


\begin{theorem} \label{thmS:dyaPM}
The expected search time of $dyaPM$ of achieving resolution $1\slash \delta$ and reliability $0<\epsilon<1$, can be upper bounded by 
\begin{equation}
    E[\tau_{\epsilon,\delta} ] \leq   \frac{\log (1\slash \delta)}{I(1\slash 2, p[2^{-l}])} + \frac{\log (1\slash \epsilon)}{C_1(p[\delta])}  + \rv{ o(\log \frac{1}{\delta \epsilon}) },
\end{equation}
for any fixed $l>0$ such that $2^{-l} >  (e\log\frac{1}{\delta \epsilon})^{-K_d}$, where 
\rv{
\begin{equation}
 K_d := \min \Bigg\{ \min_{\rho\in [0,1\slash 4]} \max \{ f(\rho),g(\rho)\} ,  \min_{\rho\in [1\slash 4,1\slash 2]} f(\rho) \frac{1}{4} D\[( \frac{1}{4}  B_1 + \frac{3}{4} B_0 \Big\|  B_0 \]) \Bigg\}  > 0,
\end{equation}
$B_1 = \text{Bern}(1-p[1\slash 2])$, $B_0 = \text{Bern}(p[1\slash 2])$, 
\begin{equation}
f(\rho) :=  \rho D\[( B_1 \big\|   (3 \slash 4) B_1 + (1 \slash 4) B_0  \]),    
\end{equation}
\begin{equation}
    \begin{aligned}
        g(\rho) := (1\slash 2 - \rho)  D\Big( (1-4\rho)B_1 + 4\rho B_0 \ \Big\|  (1\slash 2 + \rho)B_1 +  (1\slash 2 - \rho)B_0 \Big)
    \end{aligned},
\end{equation}
and $D \[( \cdot \| \cdot \])$ is the KL divergence defined in Sec. \ref{sec:notation}.
}
\end{theorem}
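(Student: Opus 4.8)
The plan is to establish Theorem~\ref{thmS:dyaPM} by the same two-phase drift (Lyapunov) analysis that underlies Theorem~\ref{thmS:hiePM}, modified to exploit the finer prefix-matching step that distinguishes $dyaPM$ from $hiePM$. I would track the log-odds of the true bin, $U(t) := \log\frac{\pi_\theta(t)}{1-\pi_\theta(t)}$ (equivalently, the nested log-likelihood along the dyadic chain containing $\theta$, as flagged in the contributions), which starts near $\log\frac{\delta}{1-\delta}\approx -\log\frac{1}{\delta}$ under the uniform prior and must cross $\log\frac{1-\epsilon}{\epsilon}\approx\log\frac{1}{\epsilon}$ before the variable-length rule $\max_i\pi_i(t)>1-\epsilon$ fires. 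Writing $\tau=\tau_1+\tau_2$, where $\tau_1$ is the first time $\pi_\theta(t)\ge\tfrac12$ (i.e. $U(t)\ge 0$) and $\tau_2$ the remaining time to cross $\log\frac{1-\epsilon}{\epsilon}$, the two summands in the claimed bound will arise from separately lower-bounding the one-step drift $\expect{U(t+1)-U(t) | \mathcal{F}_t}$ in each phase; this per-step drift is precisely a step-wise EJS-divergence quantity.

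For the search phase, I would argue that by Lemma~\ref{lemmaS:simple} the posterior is piecewise constant on $O(t)$ intervals, and that the smallest half-mass dyadic cell $H^{m^*_t}_{l^*_t}$ has, by minimality of $l^*_t$, no child exceeding $\tfrac12$. Hence, as long as the posterior remains spread (no bin carrying mass close to $\tfrac12$), the prefix cut $k^*$ inside $H^{m^*_t}_{l^*_t}$ yields $\prob{\theta\in S_{t+1} | \mathcal{F}_t}$ arbitrarily close to $\tfrac12$, giving drift at least $I(\tfrac12, p[|S_{t+1}|]) - o(1) \ge I(\tfrac12, p[2^{-l}]) - o(1)$ once the cell has shrunk below size $2^{-l}$ (using that $p$ is non-decreasing and $I(\tfrac12,\cdot)$ non-increasing). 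A Wald-type/optional-stopping argument applied to the accumulated drift then gives $\expect{\tau_1}\le \frac{\log(1/\delta)}{I(\tfrac12,p[2^{-l}])}+o(\log\frac{1}{\delta\epsilon})$.

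For the verification phase, once $\pi_\theta(t)>\tfrac12$ the singleton $\{\theta\}$ is itself the smallest half-mass dyadic cell, so $S_{t+1}=\{\theta\}$ is queried exactly at the minimal noise level $p[\delta]$; repeatedly testing $\theta$ against its complement drives $U(t)$ upward at expected rate $D(\text{Bern}(1-p[\delta])\|\text{Bern}(p[\delta]))=C_1(p[\delta])$, the optimal reliability exponent. The same stopping argument yields $\expect{\tau_2}\le \frac{\log(1/\epsilon)}{C_1(p[\delta])}+o(\log\frac{1}{\delta\epsilon})$, and summing the two phases produces the stated bound.

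The main obstacle, and the step where $dyaPM$ genuinely departs from $hiePM$, is controlling the search-phase drift when the prefix cut \emph{cannot} land near $\tfrac12$ — exactly when a single bin inside $H^{m^*_t}_{l^*_t}$ already carries an anomalously large share $\rho$ of the mass. Parametrizing this imbalance by $\rho$ and splitting into $\rho\in[0,\tfrac14]$ and $\rho\in[\tfrac14,\tfrac12]$, I would produce a drift lower bound in each regime by comparing the worst-case output laws $B_1=\text{Bern}(1-p[\tfrac12])$ and $B_0=\text{Bern}(p[\tfrac12])$; the two competing bounds $f(\rho)$ (progress on the true bin) and $g(\rho)$ (progress on collapsing the imbalanced cell) are exactly the KL-divergence exponents entering $K_d$, and taking $\max\{f(\rho),g(\rho)\}$ (best available move) and then $\min_\rho$ (worst imbalance) yields $K_d>0$. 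Verifying this strict positivity, and checking that it forces the number of queries spent while the cell still exceeds size $2^{-l}$ to contribute only $o(\log\frac{1}{\delta\epsilon})$ — so that the leading rate remains $I(\tfrac12,p[2^{-l}])$ for any admissible $l$ with $2^{-l}>(e\log\frac{1}{\delta\epsilon})^{-K_d}$ — is the technical crux of the argument.
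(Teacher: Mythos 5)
Your high-level architecture matches the paper's (a drift/EJS analysis of a log-likelihood potential, an acquisition phase vs.\ a confirmation phase, and a case analysis in an imbalance parameter $\rho$ producing $f(\rho)$, $g(\rho)$, $K_d$), and you correctly locate the crux in controlling the time during which the query set is still larger than $2^{-l}$. But that is exactly where your proposal has a genuine gap, and your account of where $K_d$ enters is not the paper's. In the paper, $K_d$ is \emph{not} a drift bound for the true-bin log-odds in a bad regime; it is the one-step drift of a second, auxiliary submartingale. One coarsens the posterior to the $2^l$ dyadic bins of level $l$ and forms the nested average log-likelihood $U^{\{l\}}(t)=\sum_{q=1}^{2^l}\pi^{\{l\}}_q(t)\log\bigl(\pi^{\{l\}}_q(t)/(1-\pi^{\{l\}}_q(t))\bigr)$; Lemma~\ref{lemmaS:EJSdyaPM} shows its expected increment is at least $K_d>0$ at \emph{every} step, via the two cases ($S_{t+1}$ contains a level-$l$ bin, or is contained in one) that generate $f$, $g$ and the third KL term. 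Because $dyaPM$ queries inside the smallest dyadic cell of mass at least $1/2$, the event $\{\delta|S_{t+1}|>2^{-l}\}$ is contained in $\{\max_q\pi^{\{l\}}_q(t)<1/2\}\subseteq\{U^{\{l\}}(t)<0\}$, so Azuma's inequality (Lemma~\ref{lemmaS:azuma}) gives $\prob{\delta|S_{t+1}|>2^{-l}}\le k\,e^{-tK_d^2/(2(B+K_d)^2)}$ for all $t>-U^{\{l\}}(0)/K_d$. That exponential tail is then fed into the restart/total-probability argument of Lemma~\ref{lemma:total_prob}, which is the step that actually converts ``the query set is eventually small'' into the leading term $\log(1/\delta)/I(1/2,p[2^{-l}])$ plus $o(\log\frac{1}{\delta\epsilon})$. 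Your proposal asserts the conclusion of this step without the device that proves it; a pointwise drift bound in the imbalanced regime gives no control on \emph{how long} that regime persists, which is what is needed.

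Two smaller corrections. First, the admissibility condition $2^{-l}>(e\log\frac{1}{\delta\epsilon})^{-K_d}$, which you leave unmotivated, is precisely the requirement that the Azuma burn-in time $-U^{\{l\}}(0)/K_d\approx(\log 2^l)/K_d$ not exceed $\lceil\log\log\frac{1}{\delta\epsilon}\rceil$, so that the pre-concentration period is absorbed into the $o(\log\frac{1}{\delta\epsilon})$ term. Second, in the paper $\rho$ is the posterior mass $\pi_{[d,b_m-1]}$ of the prefix of the queried dyadic cell preceding the level-$l$ bin containing the cut $k^*$ (not the mass of a single anomalously heavy bin), and $f$, $g$ arise by retaining only the $U^{\{l\}}$-drift contributions of that prefix and of $\text{bin}(q^*)$, respectively. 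Your two-phase decomposition and the confirmation-phase exponent $C_1(p[\delta])$ are consistent with Fact~\ref{factS:EJSthm} and Lemma~\ref{factS:EJSori_sort_dya}, so what remains is to supply the coarse-binned submartingale and its concentration.
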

\begin{proof}
See Appendix \ref{proofS:hiePM_dyaPM}.
\end{proof}

\begin{remark}
By taking $\delta\rightarrow 0$ and then $l\rightarrow \infty$, we conclude that $dyaPM$ achieves the best possible acquisition rate $I(1\slash 2, p_{\min})$. And by taking $\epsilon \rightarrow 0$, $dyaPM$ achieves the best reliability exponent $C_1(\delta)$. To the best of our knowledge, $dyaPM$ is the first and the only known algorithm with connected query geometry with asymptotic optimal time complexity under measurement-dependent noise 
\end{remark}

\begin{remark}
As shown in Algorithm \ref{algS:dyaPM}, both the computational and memory complexity are again dominated by tracking the posterior representation $\.\pi_{\mathcal{I}^{(t)}}, \mathcal{I}^{(t)}$ in Procedure \ref{proc:seqBayes}. By Theorem \ref{thmS:dyaPM} we know that the computational and memory complexity of $dyaPM$ is of order $O(\log \frac{1}{\delta})$.
\end{remark}

\subsection{Asymptotic Results} \label{sec:asym}
We illustrate the asymptotic results from Theorems \ref{thmS:sortPM}-\ref{thmS:dyaPM} in Fig.~\ref{figS:RE} in terms of the achievable rate-reliability $(R,E)$ pair when the noise profile is given as $p[x] = 0.1 + 0.5x$, $0\leq x \leq \frac{1}{2}$. Note that the blue line not only represents the achievable $(R,E)$ pair of $sortPM$, $3rand$, $maxEJS$, and $dyaPM$, but also illustrates the converse as it is given in \cite{Kaspi2014}. \rv{We note the significant improvement of $hiePM$, even though suboptimal in general, over medianPM. We also compare our analysis to the analysis of the independent random walk (IRW) algorithm from \cite{Wang2018}. Specializing the steps of the analysis in \cite{Wang2018} to our setting, we derive the achievable rate of IRW. In particular, we aim to optimize a few constants in the proposed analysis of \cite{Wang2018}: minimization of a moment generating function in Chernoff bound, and numerically solving the number of samples at ``local test'' of IRW for ensuring probability of miss detection and false alarm less than a control parameter. We then further optimize the rate expression of their analysis over that control parameter to get the final achievable rate of IRW. As we can see in Fig.~\ref{figS:RE}, the theoretical guarantee of IRW is far from the optimal achievable rate-reliability.}


\begin{figure}[ht!b]
    \centering
    \includegraphics[width=0.8\textwidth]{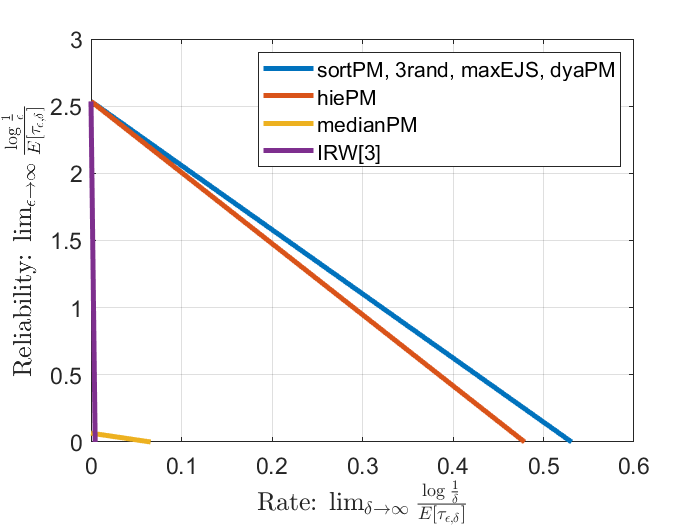}
    \caption{Achievable rate-reliability region:     The noise profile is set to be $p[x] = 0.1 + 0.5x $, $0\leq x \leq \frac{1}{2}$. (this means $p_{\min} = 0.1$, $p_{\max} = p[\frac{1}{2}] = 0.35$).}
    \label{figS:RE}
\end{figure}


\section{Numerical Examples}
\label{secS:num}
In this section, we give a detailed comparative analysis of our algorithm in a non-asymptotic sense. In particular, Fig. \ref{figS:err_linear} illustrates the error probability versus the number of queries at a fixed resolution $1\slash \delta=2^{15}$. We note that at this high resolution, the IRW algorithm from \cite{Wang2018} does not provide comparable performance due to sub-optimal asymptotic performance; hence we left out IRW from the Fig. \ref{figS:err_linear}. We compare all the algorithms under measurement-dependent Bernoulli noise where $p[x] = 0.1 + 0.5x$. For applications of our algorithm under non-Bernoulli noise profile, we refer readers to \cite{Chiu2019}. Note that for this noise profile, the acquisition capacity is $I(\frac{1}{2},p_{\min}=0.1) = 0.531$. In other words, our theory predicts a significant drop in error probability at $\tau > \frac{15}{0.531} = 28.25$. We also note that we expect this drop in error probability be significantly sharper for the variable length operation of the algorithm. We use algorithm-VL to represent the variable length termination of the algorithm. Likewise, we use algorithm-FL to represent the fixed length termination of the algorithm. 
\begin{figure}[ht!b]
    \centering
    \includegraphics[width=0.8\textwidth]{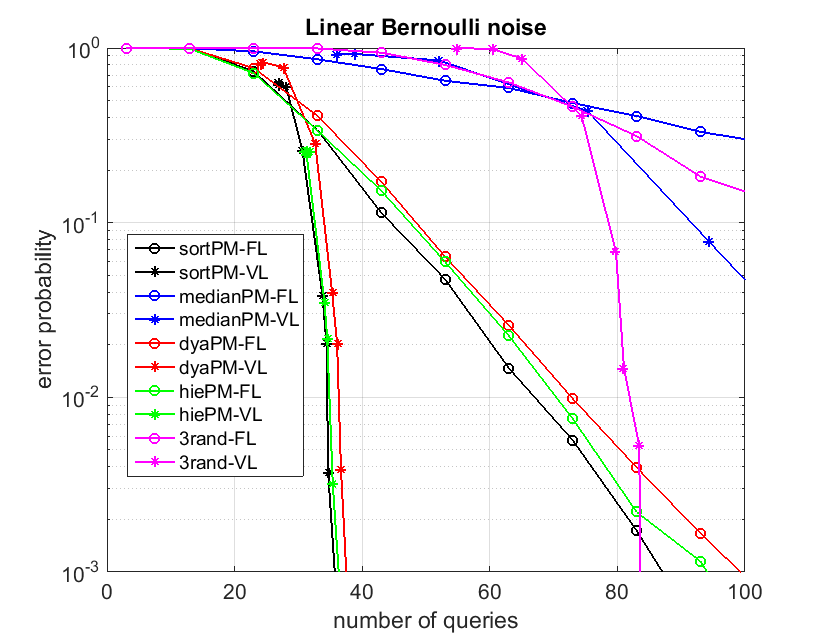}
    \caption{Error probability vs. number of queries: We consider the linear noise case where we set resolution $\frac{1}{\delta} = 2^{15}$ and Bernoulli noise with linear flipping probability $p[\delta|S|] = 0.1 + \delta|S|\slash 2 $ }
    \label{figS:err_linear}
\end{figure}

As we see in Fig. \ref{figS:err_linear}, the proposed algorithms $sortPM$, $dyaPM$, and $hiePM$ all enjoy the optimal error exponent $C_1(p[\delta])$ with variable length (VL) operation for both measurement independent and measurement-dependent noise, as predicted by Theorem 1-3. We also note that, despite the restriction of contiguous query area, $dyaPM$ and $hiePM$ perform almost the same as $sortPM$ both asymptotically and non-asymptotically in reliability. As expected, the classic PM performs rather poorly. On the other hand, while 3rand is also asymptotically optimal in reliability with VL operation, we note a non-negligible non-asymptotic performance drop compared to our proposed algorithms.

\section{Conclusion}

Our formulation of the four different complexities shows a systematic way that bridges theoretical studies of noisy search problem with practical engineering problem. Not only the low time/computational/memory complexity of the proposed strategies but also their \textit{query geometry} is shown to be suitable for practical applications. Particularly, restricting the query set with the hierarchical query geometry is found to be useful in the initial beam alignment problem in wireless communication \cite{Chiu2019}. Thanks to the Bayesian framework, our algorithms also adapts to different noise statistics (such as Poisson statistics in heavy hitter detection in networking), making our proposed algorithm potentially applicable in in many other target search applications. 

By the hierarchical query geometry, $hiePM$ also offers a natural generalization to a higher dimension or any structure that can be bisected. Applying $hiePM$ to more practical settings such as a target localization using drone \cite{Lu2018} is one of interesting extension of this paper. On the other hand, by Theorem \ref{thmS:hiePM}, we know that the (expected) number of queries grows only linearly in the number of dimensions. This benefit also renders $hiePM$ suitable for active learning problem where a learner tries to learn a classifier in multi-dimension by actively querying examples for labels.


%


\bibliographystyle{IEEEbib}
\bibliography{./refs}

\appendices

\section{\rv{Proof of the Theorems}}

\subsection{\rv{Overview of the proof}}
\label{proof-overview}
\rv{
The basic structure of the proof is by studying the crossing time of sub-martingales of relevant quantities that are related to the event of sufficient reliability in the target resolution. This idea of proof has been applied successfully on searching under measurement independent noise \cite{Shayevitz11, Naghshvar2015}. In our proof, we focus on the sub-martingales defined by the functional average of the log-likelihood of the posterior vector, whose expected drift is given by the Extrinsic Jensen Shannon (EJS) Divergence \cite{Naghshvar2015}. We review preliminary knowledge and facts about EJS in Appendix \ref{sec:reEJS}, from which we modify and derive lemmas (Lemma \ref{factS:EJSori_sort_dya} and Lemma \ref{factS:EJSori_sort_dya} stating the lower bound of the sub-martingale drift (EJS) under the proposed algorithms $sortPM$, $hiePM$, and $dyaPM$. The bounds in Lemma \ref{factS:EJSori_sort_dya} and Lemma \ref{factS:EJSori_sort_dya} are dependent on the noise statistics and hence on the size of the query set. Therefore, a naive application of these bounds on the crossing time study can only give the rate guarantee as if searching under the worst noise statistics (largest query set).

Our main contribution in the analysis of searching under size-dependent noise is to show that the size of the query set shrinks ``fast enough'' by using the corresponding search strategy. We show by Lemma \ref{lemma:total_prob} in Section \ref{sec:tot} (by the use of the total probability theorem) that it is sufficient (i.e. ``fast enough'') for establishing the asymptotic optimality if $\prob{\delta |S_t|>\alpha}$, the probability that the query set is larger than $any$ given fixed $\alpha>0$, decays exponentially over time. To prove such large deviation behaviour, we again analyze proper submartingales for each of the proposed search strategies. In particular, we prove that the sorted-and-coarse-binned average log-likelihood under $sortPM$ (Lemma \ref{lemmaS:EJSsortPM}), and the nested average log-likelihood under $hiePM$ and $dyaPM$ (Lemma \ref{lemmaS:EJShiePM}, and Lemma \ref{lemmaS:EJSdyaPM}) are all submartingales with positive drift, respectively. Together with the Azuma's Inequality (Lemma \ref{lemmaS:azuma}), we establish the aforementioned exponential decay probability, thereby concluding the assertions. We refer readers to below Appendix \ref{proofS:sortPM} and \ref{proofS:hiePM_dyaPM} for more details. 

Finally, we remark that the query set shrinkage is mainly based on the concentration behaviour of the posterior probability over time. This kind of concentration and error exponent of posterior matching has also been analyzed in the context of measurement-independent noise \cite{Henderson2013,Li2014}. However, the proof and the metrics in \cite{Henderson2013} and \cite{Li2014} are different from this paper and is not directly applicable here.  
}

\subsection{\rv{Upper-bounding the Expected Search Time with Measurement-Dependent Noise}} \label{sec:tot}
From the expected query time upper bound via the use of EJS (Fact \ref{factS:EJSthm}) and the query size $\delta |S_{t+1}|$ dependent lower bound of EJS given in Lemma \ref{factS:EJSori_sort_dya} and Lemma \ref{factS:EJSori_hiePM}, we see that intuitively we need $I(1\slash 2, p[\delta |S_{t+1}|])$ or $I(1\slash 3, p[\delta |S_{t+1}|])$ to be large, or equivalently the size of the query region $|S_{t+1}|$ to be small, in a certain sense. In particular, we can handle the query size shrinkage in a probabilistic manner by providing an exponentially decay tail. Indeed, we have the following proposition:
\begin{lemma} \label{lemma:total_prob}
Given any search strategy $\gamma$ with $\delta |S_{t+1}| \leq 1\slash 2$ and
\begin{align}
    &EJS(\.\pi(t),\gamma) \geq R(\delta|S_{t+1}|), \ \forall \ t\\
    &EJS(\.\pi(t),\gamma) \geq \tilde{\.\pi} E ,\ \forall \max_i \.\pi_i \geq  \tilde{\pi},
\end{align}
for some $R(\delta |S_{t+1}|)>0$ increasing in $\delta|S_{t+1}|$ and $E>0$. If further
\begin{equation}
    \prob{\delta|S_{t+1}| > \alpha} \leq k_0 e^{-t E_0}, \ \forall \ t > T_0
\end{equation}
for some $1\slash 2 >\alpha>\delta$, $k_0>0$, $E_0>0$, and $T_0 > \lceil \log\log (\frac{1}{\delta \epsilon}) \rceil $, the expected time of the strategy $\gamma$ achieving resolution $1\slash \delta$ and reliability $\epsilon$ can be upper bounded by
\begin{equation}
    E[\tau_{\epsilon,\delta} ] \leq   \frac{\log (1\slash \delta)}{R(\alpha)} + \frac{\log (1\slash \epsilon)}{E}  + g_{R,E}(\epsilon,\delta),
\end{equation}
where 
\begin{equation}
\begin{aligned}
     & g_{R,E}(\epsilon,\delta) := \frac{k_0 e^{-E_0}}{(1-e^{-e_0})  (\log \frac{1}{\delta \epsilon})^{E_0}}  \times \\
    & \qquad  \[( \lceil \log\log \frac{1}{\delta \epsilon}\rceil + \frac{\log \frac{1}{\delta} }{R(1\slash 2)} + \frac{\log \frac{1}{\epsilon}}{E} + f_{R(1\slash 2),E}(\epsilon,\delta) \]) \\
    & + \frac{k_0 e^{-2E_0}}{(1-e^{-e_0})^2(\log \frac{1}{\delta \epsilon})^{2E_0}} + \lceil \log\log \frac{1}{\delta \epsilon}\rceil  + f_{R(\alpha),E}(\epsilon,\delta)
\end{aligned}
\end{equation}
is of $o(\frac{1}{\delta \epsilon})$ as $\delta \rightarrow 0$ or $\epsilon \rightarrow 0$.
\end{lemma}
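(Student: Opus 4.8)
The plan is to treat Fact~\ref{factS:EJSthm} as a black box that converts a uniform lower bound on the EJS drift into an upper bound on the expected crossing time, and then to remove the word ``uniform'' by a total-probability argument that isolates the rare time-steps on which the query set is large. Concretely, Fact~\ref{factS:EJSthm} together with the two drift hypotheses shows that if the per-step drift were at least $R(\alpha)$ throughout phase one (posterior not yet concentrated) and at least $\tilde{\pi} E$ throughout phase two, then the expected time would be at most $A := \frac{\log(1/\delta)}{R(\alpha)} + \frac{\log(1/\epsilon)}{E} + f_{R(\alpha),E}(\epsilon,\delta)$, while the same fact applied with the worst-case rate yields a bound $B := \frac{\log(1/\delta)}{R(1/2)} + \frac{\log(1/\epsilon)}{E} + f_{R(1/2),E}(\epsilon,\delta)$ that is valid from \emph{any} starting posterior, since the total admissible rise of the underlying log-likelihood potential is at most $\log\frac{1}{\delta\epsilon}$ and $R(1/2)\le R(\delta|S_{t+1}|)$ holds regardless of the query size. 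Here I invoke the monotonicity of $R$ in two ways: on any step with $\delta|S_{t+1}|\le\alpha$ the drift obeys $R(\delta|S_{t+1}|)\ge R(\alpha)$, which is exactly what makes the good rate appear in $A$; and on every step $R(\delta|S_{t+1}|)\ge R(1/2)$ since $\delta|S_{t+1}|\le 1/2$, which is what makes $B$ a legitimate worst-case bound.

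Next I would split the time axis at $T_0=\lceil\log\log\frac{1}{\delta\epsilon}\rceil$. The first $T_0$ steps are simply paid for in full; since $T_0 = O(\log\log\frac{1}{\delta\epsilon})$ this is a lower-order contribution and explains the isolated $\lceil\log\log\frac{1}{\delta\epsilon}\rceil$ summands in $g_{R,E}$. For the steps after $T_0$, let $\tau^{\ast}$ denote the first time $t>T_0$ at which $\delta|S_{t+1}|>\alpha$. On the event $\{\tau^{\ast}>\tau\}$ every query after $T_0$ has size at most $\alpha$, so the drift is at least $R(\alpha)$ throughout phase one and the remaining expected time measured from $T_0$ is at most $A$. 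On the complementary event $\{\tau^{\ast}\le\tau\}$ I would invoke the strong Markov property to restart the estimate at $\tau^{\ast}$, bounding the remaining time by the worst-case quantity $B$; this is where the rate $R(1/2)$ and the correction $f_{R(1/2),E}$ enter $g_{R,E}$.

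The key quantitative input is the probability of the bad event, which a union bound plus the assumed exponential tail controls as
\begin{equation}
\prob{\tau^{\ast}\le\tau}\le\sum_{t>T_0} k_0 e^{-tE_0}\le \frac{k_0 e^{-E_0}}{(1-e^{-E_0})\,(\log\frac{1}{\delta\epsilon})^{E_0}}=:q,
\end{equation}
where the last step uses $e^{-T_0 E_0}\le(\log\frac{1}{\delta\epsilon})^{-E_0}$. Combining the two cases by the total probability theorem gives $\expect{\tau_{\epsilon,\delta}}\le T_0 + A + q\,(T_0+B)$, in which $T_0+A$ reproduces the main terms together with the group of corrections $\lceil\log\log\frac{1}{\delta\epsilon}\rceil+f_{R(\alpha),E}$, and $q\,(T_0+B)$ reproduces the term of $g_{R,E}$ carrying the prefactor $q$. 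Carrying the restart one level deeper — the restarted run can again encounter a large query, now with probability at most another factor of order $q$ — produces the second-order summand proportional to the square of the tail factor that also appears in $g_{R,E}$; reassembling these pieces recovers the stated expression exactly.

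Finally I would verify the claimed order. The dominant correction is $q\cdot B$: since $B$ grows like $\log\frac{1}{\delta\epsilon}$ while $q$ carries the decisive factor $(\log\frac{1}{\delta\epsilon})^{-E_0}$, the product is of order $(\log\frac{1}{\delta\epsilon})^{1-E_0}$, which is vanishingly small relative to the main terms for any $E_0>0$; the remaining $T_0$ and $f_{R(\alpha),E}$ contributions are likewise of lower order. I expect the main obstacle to be the restart step: conditioning directly on ``all queries small'' would bias the posterior dynamics, so the argument must instead restart the worst-case bound $B$ at $\tau^{\ast}$ through the strong Markov property and then control the induced geometric cascade of restart probabilities, taking care that $B$ is a valid bound from the arbitrary (and possibly unfavorable) posterior realized at $\tau^{\ast}$ rather than only from the uniform prior.
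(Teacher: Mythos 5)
Your proposal is correct and follows essentially the same route as the paper: a total-probability decomposition into the event that some query after time $\lceil\log\log\frac{1}{\delta\epsilon}\rceil$ is larger than $\alpha$ versus its complement, a restart of the time-homogeneous posterior Markov chain at the bad time using the worst-case bound from Fact~\ref{factS:EJSthm} with rate $R(1\slash 2)$, and a geometric summation of the exponential tail $k_0e^{-tE_0}$ to produce the lower-order correction $g_{R,E}$. The only cosmetic difference is that you restart at the \emph{first} large-query time via the strong Markov property, whereas the paper sums the restarted bound over every event $E_t=\{\delta|S_{t+1}|>\alpha\}$ (a union bound over non-disjoint events); both yield the same leading terms and the same $o(\log\frac{1}{\delta\epsilon})$ remainder.
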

\begin{proof}
We prove this proposition via the total probability theorem and the re-start of the time homogeneous Markov chain $\.\pi(t)$. Specifically, let us define the ``bad" event $E_t = \{ \delta |S_{t+1}| > \alpha\}$  and the ``good" event $F_n = \bigcup_{t=n}^{\infty} E_t$. For every $n$, by total probability theorem and the union bound, we have 
\begin{equation} \label{eqS:total_prob}
    \begin{aligned}
       \expect{\tau_{\epsilon,\delta}} &= \int_{\Omega} \tau_{\epsilon,\delta} \ d\mathbb{P} \leq \sum_{t=n}^{\infty} \int_{E_t} \tau_{\epsilon,\delta} \ d\mathbb{P} + \int_{F_n^C} \tau_{\epsilon,\delta} \ d\mathbb{P} \\
       &= \sum_{t=n}^{\infty} \int_{E_t} \expect{ \tau_{\epsilon,\delta} |  \.\pi(t) } \ d\mathbb{P} + \int_{F_n^C} \tau_{\epsilon,\delta} \ d\mathbb{P}\\
        &\overset{(a)}{\leq} \sum_{t=n}^{\infty} \prob{E_t}  \[( t+  \frac{\log \frac{1}{\delta} }{R(1\slash 2)} + \frac{\log \frac{1}{\epsilon}}{E} + f_{R(1\slash 2),E}(\epsilon,\delta)  \]) \\
        & \qquad + \int_{F_n^C} \tau_{\epsilon,\delta} \ d\mathbb{P} \\
        &\overset{(b)}{\leq} \sum_{t=n}^{\infty} \prob{E_t}  \[( t+  \frac{\log \frac{1}{\delta} }{R(1\slash 2)} + \frac{\log \frac{1}{\epsilon}}{E} + f_{R(1\slash 2),E}(\epsilon,\delta)  \]) \\
        & \qquad + n + \frac{\log \frac{1}{\delta} }{R(\alpha)} + \frac{\log \frac{1}{\epsilon}}{E} + f_{R(\alpha),E}(\epsilon,\delta) ,
    \end{aligned}
\end{equation}
where $ f_{R,E}(\epsilon,\delta)$ is as defined in Fact \ref{factS:EJSthm}. Here $(a)$ follows from the time homogeneity of the Markov Chain $\.\pi(t)$ re-starting at time $t$, together with Fact~\ref{factS:EJSthm} and $\delta |S_{t+1}|\leq 1\slash 2$, written as
\begin{equation}
\begin{aligned}
    \expect{ \tau_{\epsilon,\delta} | \.\pi(t)} \leq t +  \frac{\log \frac{1}{\delta} }{R(1\slash 2)} + \frac{\log \frac{1}{\epsilon}}{E} + f_{R(1\slash 2),E}(\epsilon,\delta).
\end{aligned}
\end{equation}
Similar argument can be made for $(b)$ with $\delta |S_{t+1}|\leq \alpha$ for $t\geq n$ under event $F_n^C$. Now, plugging the assumption $\prob{E_t} = \prob{\delta |S_{t+1}| > \alpha} \leq k_0 e^{-t E_0}$
into (\ref{eqS:total_prob}) with some algebra, we have
\begin{equation}
\begin{aligned}
    &E[\tau_{\epsilon,\delta} ]  \leq  \frac{k_0 e^{-n E_0}}{1-e^{-E_0}} \times \\
    & \quad \[( n + \frac{e^{-nE_0}}{1-e^{-E_0}} +  \frac{\log \frac{1}{\delta}}{R(1\slash 2)} + \frac{\log \frac{1}{\epsilon}}{E}  + f_{R(1\slash 2),E}(\epsilon,\delta) \]) \\
    & \qquad + n + \frac{\log \frac{1}{\delta}}{R(\alpha)} + \frac{\log \frac{1}{\epsilon} }{E}  + f_{R(\alpha),E}(\epsilon,\delta).
\end{aligned}
\end{equation}
Letting $n= \lceil \log\log \frac{1}{\delta \epsilon} \rceil $, we have the assertion of the proposition.
\end{proof}

By lemma \ref{lemma:total_prob}, we can see that for proving Theorem \ref{thmS:sortPM},\ref{thmS:dyaPM},\ref{thmS:hiePM}, it is sufficient to provide exponential decay tail probability of a large query size $\prob{\delta |S_{t+1}| > \alpha}$ for each of the proposed algorithm $S_{t+1} = \gamma(\.\pi(t))$. The main idea of studying the event $\{\delta |S_{t+1}| > \alpha\}$ is to group the region into coarse bins of size $\alpha$ according to each of the search strategy. And by the nature of each algorithm the event $\{\delta |S_{t+1}| > \alpha\}$ is equivalent to the event that one coarse bin has posterior larger than half. By further considering a similar submartingale of an average log-likelihood as in (\ref{eqS:aver_log_like}) but over the coarse bin posterior, the problem is then transformed to be the tail probability of a level crossing of a strictly positively drifted submartingle, where we can bound it by the Azuma's Inequality (Lemma \ref{lemmaS:azuma}). Now let us provide the details:

\subsection{Proof of Theorem~\ref{thmS:sortPM}} \label{proofS:sortPM}
Along with the operation of $sortPM$, we first sort the posterior, and then group into bins with size $\delta |\text{bin}(q)| = \alpha$, written as
\begin{equation} \label{eqS:alpha_pos}
   \pi_{q}^{\alpha}(t) := \sum_{i\in \text{bin}(q)} \pi^{\downarrow}_i(t), \ q = 1,2,..., 1\slash \alpha ,
\end{equation}
where $\pi^{\downarrow}$ is the sorted posterior, $\text{bin}(q):= \{\frac{\alpha}{\delta} (q-1) +1,\frac{\alpha}{\delta} (q-1) +2, ... ,\frac{\alpha}{\delta} q \}$. For notational simplicity, we deal with the case where $1\slash \alpha$ and $\alpha \slash \delta$ are both integer (the proof follows similarly for non-integer case). Let us further define the average log-likelihood of the binned sorted posterior
\begin{equation} \label{eqS:alpha_U}
\begin{aligned}
     U_{\alpha}(t) &:= U\[(\.\pi^{\alpha}(t)\]) \\
     &= \sum_{q=1}^{1\slash\alpha} \pi_{q}^{\alpha}(t) \log \frac{\pi_{q}^{\alpha}(t)}{1-\pi_{q}^{\alpha}(t)}.
\end{aligned}
\end{equation}
By the query set selection rule in Algorithm \ref{algS:sortPM} together with the definition of $U_{\alpha}(t)$, under $sortPM$ strategy we have 
\begin{equation}
\begin{aligned}
\prob{ \delta |S_{t+1}| > \alpha } &\leq \prob*{ \pi_1^{ \alpha}(t) < \frac{1}{2} } \\
&\leq  \prob*{ U_\alpha(t) < 0 }.
\end{aligned}
\end{equation}
Now, by Lemma \ref{factS:EJSdiff} and Lemma \ref{lemmaS:EJSsortPM}, $U_\alpha(t)$ is a submartingale with bound difference 
\begin{equation}
\begin{aligned}
    &|U_\alpha(t+1) - U_\alpha(t)| \\
    &\leq B_\alpha: = \log(1\slash \alpha) + \log \frac{1-p_\text{min}}{p_\text{min}} + e.
\end{aligned}
\end{equation}
Further note that $U_\alpha(0) = -\log(1\slash \alpha-1) < -\log(1\slash \alpha)$ and together with lemma \ref{lemmaS:azuma}, we have
\begin{equation} \label{eqS:sortPM_exp_tail}
    \begin{aligned}
        \prob{ \delta |S_{t+1}| > \alpha } &\leq \prob*{ U_\alpha(t) < 0 } \\
        &\leq k_s e^{-t \frac{K_s^2}{2 (B_\alpha + K_s)^2} }\ \ \forall t > \frac{\log (\frac{1}{\alpha})}{K_s},
    \end{aligned}
\end{equation}
where $k_s = e^{  \frac{K_s \log(1\slash \alpha)}{K_s +  B_\alpha} }$. Since $\alpha > (e\log \frac{1}{\delta \epsilon})^{-K_s}$ and therefore $ \frac{\log(1\slash\alpha)}{K_s} < \lceil \log\log  \frac{1}{\delta \epsilon} \rceil$, by proposition \ref{lemma:total_prob}, we conclude the assertion.

\subsection{Proof of Theorem~\ref{thmS:hiePM} and \ref{thmS:dyaPM}}
\label{proofS:hiePM_dyaPM}
Without loss of generality, we may assume that the resolution $\delta$ is such that $L=\log_2(1\slash \delta)$ is an integer. If otherwise, we can choose a smaller $\delta'$ such that $\log_2(1\slash \delta')$ is an integer and the analysis will follow similarly without affecting the asymptotic conclusions. One of the key attribute of $dyaPM$ and $hiePM$ is the nested resolution due to the natural bisection. To analyze it, we introduce the posterior vector $\.\pi^{\{l\}} (t)$ of a nested resolution level $l < L$ with length $2^l$ where its elements are defined as
\begin{equation} \label{eqS:nestedpi}
    \pi_q^{\{l\}} (t) := \sum_{i\in \text{bin}(q)} \pi_i(t), \quad q = 1,2,...,2^l,
\end{equation}
where $\text{bin}(q) := \{ (q-1) 2^{L-l}+1,(q-1) 2^{L-l}+2,...,q2^{L-l}\}$. Further, we can also define the average log-likelihood on $\.\pi^{\{l\}}$ as
\begin{equation} \label{eqS:nestedU}
    U^{\{l\}}(t)  := \sum_{q=1}^{2^l}\pi_q^{\{l\}}(t)  \log \frac{\pi_q^{\{l\}}(t)}{1-\pi_q^{\{l\}}(t)}.
\end{equation}
We have
\begin{equation}
\begin{aligned}
\prob{ \delta |S_{t+1}| > 2^{-l} } &\leq \prob*{ \max_q \pi_q^{\{l\}}(t) < \frac{1}{2} } \\
&\leq  \prob*{ U^{\{l\}}(t) < 0 }.
\end{aligned}
\end{equation}
The proof then follows similarly as in the proof of Theorem \ref{thmS:sortPM}: Applying proposition \ref{lemma:total_prob} with $\alpha=2^{-l}$, where the corresponding submartingale properties of $U^{\{l\}}(t)$ is by Lemma \ref{lemmaS:EJSdyaPM} and Lemma \ref{lemmaS:EJShiePM} for $dyaPM$ and $hiePM$, respectively, hence we omitted the rest.

\section{\rv{Preliminaries: Average Log-Likelihood and the Extrinsic Jensen-Shannon Divergence}}
\label{sec:reEJS}

In this subsection, we review some useful concepts in~\cite{Naghshvar2015}. The average log-likelihood of the posterior is defined as
\begin{equation} \label{eqS:aver_log_like}
    U(t) \equiv U(\.\pi(t)) := \sum_{i=1}^{1 \slash \delta } \pi_i(t) \log \frac{\pi_i(t)}{1-\pi_i(t)},
\end{equation}
with the following property:
\begin{enumerate}
    \item  $U(t)$ is a submartingale with drift $EJS$.
    \begin{equation}
        \expect{ U(t+1) | \.\pi(t) } =  U(t) + EJS(\.\pi(t), \gamma),
    \end{equation}
    where $EJS$ is the Extrinsic Jensen-Shannon divergence, defined as
    \begin{equation}
        EJS(\.\pi(t), \gamma) = \sum_{i=1}^{1 \slash \delta } \pi_i(t) D\[( P_{y_t|  i , S_{t+1} } \Big\| P_{y_{t+1}| \neq i, S_{t+1}} \])
    \end{equation}
    with 
    \begin{equation} \label{eqS:Pyt_original}
    \begin{aligned}
        P_{y_{t+1}|  i , S_{t+1} } &:= \prob{ Y_{t+1}=y_{t+1} |{\theta} = i; S_{t+1} = \gamma(\.\pi(t))} \\
                        &= \prob{ Y_{t+1}=y_{t+1} | X_{t+1} =\mathds{1}(i\in S_{t+1})  }
    \end{aligned}
    \end{equation} and 
    \begin{equation}
        \begin{aligned}
            P_{y_{t+1}|\neq i, S_{t+1} } 
            &:= \prob{ Y_{t+1}=y_{t+1} |{\theta} \neq i; S_{t+1}} \\
            &= \sum_{j\neq i} \frac{\pi_j(t)}{1-\pi_i(t)} P_{y_{t+1}| j , S_{t+1}}.
        \end{aligned}
    \end{equation}
    \item Initial value $U(0) = -\log (\frac{1}{\delta}-1)$ is directly related to the logarithm of resolution and hence the targeting rate
    \item Level crossing of $U$ is directly related to the error probability, since 
    $
        \pi_i(t)<1-\epsilon\  \forall i  \ \Rightarrow\  U(t)< \log \frac{1- \epsilon}{\epsilon}.
    $
\end{enumerate}
Analyzing the random drift from time 0 with the initial value $U(0)$ up to the first crossing time $\nu:= \min\{ t: U(t) \geq \log \frac{1}{\epsilon} \}$ is closely related to the expected drift given by $EJS$. In particular, we can then establish an upper bound for the expected targeting time $\expect{\tau_{\epsilon,\delta}}$ in terms of the predefined error probability $\epsilon$ and the resolution $\delta$.  Specifically we have the following theorem:

\begin{fact}[Theorem 1 in~\cite{Naghshvar2015}] \label{factS:EJSthm}
Define 
\begin{equation}
    \tilde{\pi} := 1- \frac{1}{1+\max \{ \log (1\slash \delta), \log (1\slash \epsilon) \}}.
\end{equation}
For adaptive search strategy with query region $S_t$, if \begin{equation}
    EJS(\.\pi(t), \gamma) \geq R \quad \forall t \geq 0
\end{equation}
and 
\begin{equation}
    EJS(\.\pi(t), \gamma) \geq \tilde{\pi} E \quad \forall  t \geq 0 \text{ s.t. } \max_{i} \pi_i(t) \geq \tilde{\pi},
\end{equation}
we have the expected targeting time associated with error probability $\epsilon$ and resolution $\delta$ bounded by
\begin{equation}
    E[\tau_{\epsilon,\delta} ] \leq   \frac{\log (1\slash \delta)}{R} + \frac{\log (1\slash \epsilon)}{E}  + f_{R,E}(\epsilon,\delta) 
\end{equation}
where $ f_{R,E}(\epsilon,\delta) = \frac{\log\log \frac{1}{\delta \epsilon}}{R} + \frac{1}{E} + \frac{96}{RE} (\frac{1-p[\delta]}{p[\delta]})^2 $.
\end{fact}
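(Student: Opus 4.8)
The plan is to realize Fact~\ref{factS:EJSthm} as a level-crossing bound for the submartingale $U(t)=U(\.\pi(t))$ of~\eqref{eqS:aver_log_like}, exploiting the three listed properties. By property~3, if the stopping rule $\max_i \pi_i(t)\geq 1-\epsilon$ has not yet triggered then $U(t)<\log\frac{1-\epsilon}{\epsilon}$; taking the contrapositive, the crossing time $\nu:=\min\{t: U(t)\geq \log\frac{1-\epsilon}{\epsilon}\}$ dominates the true stopping time, so $\tau_{\epsilon,\delta}\leq \nu$ and it suffices to bound $\expect{\nu}$. Since $U$ starts at $U(0)=-\log(1/\delta-1)$ and must climb to $\approx\log\frac{1}{\epsilon}$, the total rise is $\approx \log\frac{1}{\delta}+\log\frac{1}{\epsilon}$, and the two target rates $R$ and $E$ will account for the two segments of this climb.

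First I would record the quantitative properties of $U$ beyond property~1. The per-step increment of $U$ is governed by the log-likelihood ratios $\log\frac{P_{y_{t+1}|i,S_{t+1}}}{P_{y_{t+1}|\neq i,S_{t+1}}}$, which under the noisiest admissible channel are bounded in magnitude by $\log\frac{1-p[\delta]}{p[\delta]}$ up to a $\log$-of-dimension term; this yields a uniform almost-sure bound on $|U(t+1)-U(t)|$ and a bound on its conditional second moment. This bounded-difference and variance control is exactly what feeds the correction $\frac{96}{RE}\big(\frac{1-p[\delta]}{p[\delta]}\big)^2$ in $f_{R,E}$.

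Next I would split the climb into two phases by the confidence level $\tilde\pi$. In Phase~A I use the everywhere-valid drift $EJS(\.\pi(t),\gamma)\geq R$ from time $0$ until the first time $\max_i\pi_i(t)\geq\tilde\pi$; in Phase~B I use the stronger confidence-region drift $EJS(\.\pi(t),\gamma)\geq \tilde\pi E$ thereafter. The transition occurs near $U\approx\log\frac{\tilde\pi}{1-\tilde\pi}$, which by the choice of $\tilde\pi$ is only $O(\log\log\frac{1}{\delta\epsilon})$ and hence absorbed into $f_{R,E}$. For each phase I apply a drift/optional-stopping lemma for submartingales with bounded increments: the expected time for a submartingale of drift $\geq c$ to rise by an amount $h$ is at most $h/c$ plus an overshoot correction (the $\frac{1}{E}$ term) and an excursion correction. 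Summing the two phases gives the leading terms $\frac{\log(1/\delta)}{R}+\frac{\log(1/\epsilon)}{E}$, with all lower-order pieces collected into $f_{R,E}(\epsilon,\delta)$.

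The main obstacle is that the drift guarantee is state-dependent: the improved bound $\tilde\pi E$ holds only while $\max_i\pi_i\geq\tilde\pi$, yet the posterior can re-enter and leave the confident region, so a naive path-by-path two-phase split is not valid. The careful handling --- showing that once $U$ is high enough the downward excursions below $\tilde\pi$ are rare in expectation and contribute only the bounded-variance correction --- is the crux, and it is precisely here that the factor $96\,\big(\frac{1-p[\delta]}{p[\delta]}\big)^2/(RE)$ arises. Since this is Theorem~1 of~\cite{Naghshvar2015}, I would ultimately invoke that result; my reconstruction needs only the submartingale property, the bounded-increment and second-moment estimates, and the two-regime drift hypotheses supplied in the statement.
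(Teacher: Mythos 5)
Your proposal is correct and follows essentially the same route as the paper: the paper's proof of Fact~\ref{factS:EJSthm} is simply a deferral to Theorem~1 of~\cite{Naghshvar2015}, whose argument is exactly the submartingale level-crossing analysis of $U(t)$ with the two-regime EJS drift, bounded increments of order $\log\frac{1-p[\delta]}{p[\delta]}$, and the resulting lower-order correction $f_{R,E}$ that you reconstruct. Your sketch of the phase split at $\tilde\pi$ and the identification of the crux (handling re-entry into and exit from the confident region) accurately reflects what that external proof does, so there is no substantive gap relative to the paper's treatment.
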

\begin{proof}
The proof of Fact~\ref{factS:EJSthm} follows similarly the proof of [Theorem 1,~\cite{Naghshvar2015}]. 
\end{proof}
\begin{fact}[Lemma 2 in \cite{Naghshvar2015}] \label{factS:JS_EJS}
The EJS divergence is lower bounded by the Jensen Shanon (JS) divergence : 
\begin{equation}
    EJS(\.\pi(t), \gamma) \geq JS(\.\pi(t), \gamma),
\end{equation}
where
    \begin{equation}
        JS(\.\pi(t), \gamma) = \sum_{i=1}^{1 \slash \delta } \pi_i(t) D\[( P_{y_t|  i , S_{t+1} } \Big\| P_{y_{t+1}| S_{t+1}} \])
    \end{equation}
with $P_{y_{t+1}| S_{t+1}} := \sum_{i} \pi_(t) P_{y_{t+1}| i , S_{t+1}}.$ 
\end{fact}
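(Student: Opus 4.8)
The plan is to prove the bound termwise in the sum over hypotheses $i$. Writing $P_i := P_{y_{t+1}| i,S_{t+1}}$, $P_{\neq i} := P_{y_{t+1}|\neq i,S_{t+1}}$, and $\bar P := P_{y_{t+1}| S_{t+1}}$ for brevity, the first step is to record the mixture identity
\begin{equation}
\bar P = \pi_i(t)\, P_i + (1-\pi_i(t))\, P_{\neq i},
\end{equation}
valid for every $i$. This follows by splitting the defining sum $\bar P = \sum_j \pi_j(t) P_{y_{t+1}| j,S_{t+1}}$ into the $j=i$ term and the rest, and recognizing $\sum_{j\neq i}\pi_j(t) P_{y_{t+1}| j,S_{t+1}} = (1-\pi_i(t))\,P_{\neq i}$ from the definition of the leave-one-out mixture (using $\sum_{j\neq i}\pi_j(t)=1-\pi_i(t)$).

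The heart of the argument is then a single convexity estimate. For fixed $i$, define $f(\lambda):=D(P_i\,\|\,\lambda P_i+(1-\lambda)P_{\neq i})$ for $\lambda\in[0,1]$. Because the KL divergence is convex in its second argument and $\lambda\mapsto \lambda P_i+(1-\lambda)P_{\neq i}$ is affine, $f$ is convex on $[0,1]$. Moreover $f(0)=D(P_i\|P_{\neq i})$, $f(1)=D(P_i\|P_i)=0$, and the mixture identity gives $D(P_i\|\bar P)=f(\pi_i(t))$. Treating $\pi_i(t)$ as the convex weight in $\pi_i(t)=\pi_i(t)\cdot 1+(1-\pi_i(t))\cdot 0$ and invoking convexity of $f$ yields
\begin{equation}
D(P_i\,\|\,\bar P)=f(\pi_i(t))\le \pi_i(t)\,f(1)+(1-\pi_i(t))\,f(0)=(1-\pi_i(t))\,D(P_i\,\|\,P_{\neq i})\le D(P_i\,\|\,P_{\neq i}).
\end{equation}

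It remains only to multiply this per-hypothesis inequality by $\pi_i(t)\ge 0$ and sum over $i=1,\dots,1/\delta$; the left side becomes $JS(\.\pi(t),\gamma)$ and the right side becomes $EJS(\.\pi(t),\gamma)$, giving the claim. I do not expect a genuine obstacle in this argument: everything reduces to the mixture identity together with convexity of $D(\cdot\|\cdot)$ in its second slot. The only points deserving care are verifying the normalization $\sum_{j\neq i}\pi_j(t)/(1-\pi_i(t))=1$ in the identity, and applying convexity in the second argument rather than appealing to joint convexity. Equivalently, one can observe that $f$ is nonincreasing on $[0,1]$, being convex with global minimum $0$ attained at the right endpoint $\lambda=1$, so that $f(\pi_i(t))\le f(0)$ directly; the one-line convex-combination bound above is simply the most economical version of this.
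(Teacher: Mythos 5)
Your proof is correct and follows essentially the same route as the cited source: the paper states this as a Fact and defers entirely to Lemma~2 of \cite{Naghshvar2015}, whose argument is exactly your termwise bound $D(P_i \,\|\, \bar P) \le (1-\pi_i(t))\, D(P_i \,\|\, P_{\neq i}) \le D(P_i \,\|\, P_{\neq i})$ obtained from the mixture identity $\bar P = \pi_i(t) P_i + (1-\pi_i(t)) P_{\neq i}$ and convexity of the KL divergence in its second argument. Nothing further is needed.
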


\begin{lemma}[\rv{Theorem 1 in~\cite{Chiu2018}\footnote{This is one of the conference proceedings of this work}}] \label{factS:EJSori_hiePM}
Using the search strategy $hiePM$ with resolution $1\slash \delta$ and reliability $\epsilon$ on codebook $\mathcal{W}^L$ with $L = \log_2 (1\slash\delta)$, we have
    \begin{align}
        EJS(\.\pi(t),\gamma_h) &\geq I( 1\slash 3 , p[\delta|D_{l_{t+1}}|]) , \ \forall \ t  \label{eqS:hiePM_Rlb} \\
        EJS(\.\pi(t),\gamma_h) &\geq \tilde{\.\pi} C_1(p_{\text{min}}) ,\ \forall \max_i \.\pi_i \geq  \tilde{\pi}  \label{eqS:hiePM_Elb},
    \end{align}
where $\tilde{\pi} := 1- \frac{1}{1+\max \{ \log (1\slash \delta), \log (1\slash \epsilon) \}}.$
\end{lemma}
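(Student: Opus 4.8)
The plan is to prove both inequalities by reducing the Extrinsic Jensen-Shannon divergence to a single scalar, the posterior mass $\beta := \pi_{S_{t+1}}(t)$ that $hiePM$ places on its queried set, and then to control $\beta$ through the geometry of the selection rule. The starting point is that, because the measurement model is a BSC with input $X_{t+1} = \mathds{1}(\theta \in S_{t+1})$, the per-hypothesis output laws $P_{y_{t+1}|i,S_{t+1}}$ take only two values: $\text{Bern}(1-p)$ for the $\beta$-fraction of bins inside $S_{t+1}$, and $\text{Bern}(p)$ for the rest, where $p = p[\delta|S_{t+1}|]$. This two-point structure is what makes the divergence computable in closed form.

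For the rate bound (\ref{eqS:hiePM_Rlb}) I would first invoke $EJS \geq JS$ (Fact \ref{factS:JS_EJS}) and then evaluate $JS$ directly. Grouping the defining sum of $JS$ by membership in $S_{t+1}$ and noting that the mixture $P_{y_{t+1}|S_{t+1}} = \beta\,\text{Bern}(1-p) + (1-\beta)\,\text{Bern}(p)$ is exactly the output of a BSC$(p)$ fed a $\text{Bern}(\beta)$ input, one recognizes $JS = I(\beta, p[\delta|S_{t+1}|])$ in the paper's notation. Since $q \mapsto I(q,p)$ is concave and symmetric about $1/2$, it is minimized over any interval at the endpoint farthest from $1/2$; hence once I establish $\beta \in [1/3, 2/3]$ the bound $EJS \geq I(1/3, p[\delta|S_{t+1}|])$ follows immediately, as the two endpoints are equidistant and $I(1/3,p)=I(2/3,p)$.

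The main obstacle, and the only genuinely combinatorial step, is showing $\beta \in [1/3, 2/3]$. Let $P := \pi_{H^{m^*_t}_{l^*_t}}(t) \geq 1/2$ be the mass of the parent at the deepest level attaining mass $\geq 1/2$, and let $c_1 \geq c_2$ be its two children's masses, so $c_1 + c_2 = P$ and, by maximality of $l^*_t$, both $c_1, c_2 < 1/2$. The rule sets $\beta$ to whichever of $\{P, c_1, c_2\}$ is closest to $1/2$. The parent is selected precisely when $c_1 \leq 1-P$; combined with $c_1 \geq P/2$ this forces $P/2 \leq 1-P$, i.e. $P \leq 2/3$, so $\beta = P \in [1/2, 2/3]$. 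A child is selected precisely when $c_1 > 1-P$, giving $\beta = c_1 < 1/2$, and since $c_1 > \max\{1-P,\,P/2\} \geq 1/3$ (the last inequality holding because $1-P \geq 1/3$ when $P \leq 2/3$ and $P/2 > 1/3$ when $P > 2/3$), we get $\beta = c_1 \in (1/3, 1/2)$. Merging the two regimes yields $1/3 \leq \beta \leq 2/3$ in all cases; I expect the bookkeeping between the ``pick parent'' and ``pick child'' regimes to be the most error-prone part and would isolate it as a standalone claim.

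For the reliability bound (\ref{eqS:hiePM_Elb}), suppose $\pi_{i^*}(t) = \max_i \pi_i(t) \geq \tilde{\pi} > 1/2$. Then the singleton leaf $\{i^*\}$ already carries mass $\geq 1/2$, so the deepest qualifying level is $l^*_t = L$ and $hiePM$ necessarily queries the single bin $S_{t+1} = \{i^*\}$, for which $p[\delta|S_{t+1}|] = p[\delta] = p_{\min}$. I would then drop all but the $i=i^*$ term of $EJS$ (every term is a nonnegative KL divergence) and compute that term exactly: with $i^* \in S_{t+1}$ we have $P_{y_{t+1}|i^*,S_{t+1}} = \text{Bern}(1-p_{\min})$, while $P_{y_{t+1}|\neq i^*,S_{t+1}} = \text{Bern}(p_{\min})$ since every other bin lies outside the query. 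Using the symmetry $D(\text{Bern}(1-p)\,\|\,\text{Bern}(p)) = D(\text{Bern}(p)\,\|\,\text{Bern}(1-p)) = C_1(p)$, this term equals $\pi_{i^*}\,C_1(p_{\min}) \geq \tilde{\pi}\,C_1(p_{\min})$, which is the desired inequality.
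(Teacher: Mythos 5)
Your reliability bound and your treatment of the non-leaf case match the paper's proof, but your rate argument has a genuine gap: the claim that $\beta = \pi_{S_{t+1}}(t) \in [1/3,2/3]$ ``in all cases'' is false. Your case analysis presupposes that the node $H^{m^*_t}_{l^*_t}$ has two children in the tree, i.e. $l^*_t < L$. When the posterior concentrates on a single leaf --- say $\pi_{i^*}(t) = 0.9$ with $0.9 < \tilde{\pi}$, so neither your reliability argument nor your combinatorial argument applies --- the deepest qualifying level is $l^*_t = L$, there are no children to descend to, and $hiePM$ queries the singleton $\{i^*\}$ with $\beta = 0.9 > 2/3$. In that regime the route through $EJS \geq JS = I(\beta, p)$ collapses: $I(\beta,p)$ is decreasing on $[1/2,1]$, so $I(0.9,p) < I(2/3,p) = I(1/3,p)$ and the JS lower bound simply does not deliver $I(1/3,p[\delta|S_{t+1}|])$. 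This is not a negligible edge case; it is exactly the late-stage regime in which the algorithm spends much of its time before the stopping rule fires, and it is the regime the rate bound must cover since inequality (\ref{eqS:hiePM_Rlb}) is asserted for all $t$.

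The paper closes this case by abandoning JS and lower-bounding the EJS sum directly when $\rho > 2/3$ (which, as you correctly observe in the converse direction, forces the query to be a singleton): writing $B^1 = \text{Bern}(1-p)$ and $B^0=\text{Bern}(p)$, the term for $i = i_{t+1}$ contributes $\rho\, D(B^1 \| B^0)$ and each term for $i\neq i_{t+1}$ contributes $\pi_i\, D\big(B^0 \,\big\|\, \tfrac{\rho}{1-\pi_i}B^1 + \tfrac{1-\rho-\pi_i}{1-\pi_i}B^0\big)$; since $\tfrac{\rho}{1-\pi_i} > 2/3 > 1/2$ and $D(B^0\|\alpha B^1 + (1-\alpha)B^0)$ is increasing in $\alpha$, every term is at least $D(B^0\|\tfrac12 B^1 + \tfrac12 B^0) = I(1/2,p) \geq I(1/3,p)$, and the weights sum to one. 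Unlike JS, the extrinsic divergence does not degrade as $\rho\to 1$, because the complement distribution $P_{y_{t+1}|\neq i,S_{t+1}}$ stays pinned near $B^0$. Adding this case would complete your proof; the remainder (the $\beta\in[1/3,2/3]$ bookkeeping for $l^*_t<L$, the identification $JS = I(\beta,p)$ via concavity and symmetry, and the reliability computation) is correct and is essentially the paper's own argument.
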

\begin{proof}
The proof of Lemma~\ref{factS:EJSori_hiePM} is a modification from proof of [Proposition 3,~\cite{Naghshvar2015}] by using Fact \ref{factS:JS_EJS}. We first prove equation (\ref{eqS:hiePM_Elb}). By the selection rule of $hiePM$, the last level codebook $S_{t+1} = D^{(l_{t+1}=\log_2(\frac{1}{\delta}))}$ is used whenever $\max_{i} \pi_i(t) \geq \tilde{\pi} > 1\slash 2$. Therefore, 
\begin{equation}
\begin{aligned}
     EJS(\.\pi(t), \gamma_h) &= \sum_{i=1}^{1\slash \delta} \pi_i(t) D\[( P_{\hat{y}_{t+1}|  i , \gamma_h} \Big\| P_{\hat{y}_{t+1}| \neq i, \gamma_h } \])\\
     &\geq \tilde{\pi} D\[( P_{\hat{y}_{t+1}| i , \gamma_h} \Big\| P_{\hat{y}_{t+1}| \neq i, \gamma_h} \]) \\
     &= \tilde{\pi}  D( \text{Bern}(1-p[S]) \| \text{Bern}(p[S]))  \\
     &= \tilde{\pi} C_1(p[\log_2 (1\slash\delta)  ]).
\end{aligned}
\end{equation}
It remains to show equation (\ref{eqS:hiePM_Rlb}). For notational simplicity, let 
\begin{equation}
    \rho \equiv \pi_{D^{l_{t+1}}}(t) := \sum_{i\in D^{l_{t+1}}} \pi_i(t)
\end{equation}
and $B^0 \equiv \text{Bern}(p[l_{t+1}])$, $B^1 \equiv \text{Bern}(1-p[l_{t+1}])$. We separate the proof into two cases:

If $\rho >  2 \slash 3$, we know that $l_{t+1} = \log_2(\frac{1}{\delta})$ by the selection rule of $hiePM$. Therefore, the set $D^{l_{t+1}}$ is of the smallest size 1. Let $ D^{l_{t+1}} = \{\ i_{t+1} \} $, we have
\begin{equation}
\begin{aligned}
     &EJS(\.\pi(t), \gamma_h) =\sum_{i=1}^{1\slash \delta} \pi_i(t) D\[( P_{\hat{y}_{t+1}|  i , \gamma_h} \Big\| P_{\hat{y}_{t+1}| \neq i, \gamma_h } \])\\
    & =   \rho D\big( B^1 \big\| B^0 \big)   \\ 
     & + \sum_{i\neq i_{t+1}}  \pi_{i}(t)  D\[( B^0 \Big\| \frac{ \rho }{1-\pi_{i}(t)}   B^1 + \frac{1-\rho-\pi_i(t)}{1-\pi_i(t)} B^0 \]) \\[2mm]
     &\stackrel{(a)}{\geq}  D\[( B^0 \Big\| \frac{1}{2}  B^1 + \frac{1}{2} B^0 \])  \\
     & =  I(1\slash 2, p[l_{t+1}] ) \geq I(1\slash 3, p[l_{t+1}] ),  \\ 
\end{aligned}
\end{equation}
where (a) is by the fact that $ D( B^1 \| B^0 )  =  D( B^0 \| B^1 )   $ and that 
$D(  B^0  \| \alpha  B^1 + (1-\alpha) B^0 )$ is increasing in $\alpha$ for $0\leq \alpha \leq 1$, together with $\frac{ \rho }{1-\pi_{i}(t)} >  2\slash 3> 1\slash 2$.

For the other case where $\rho \leq 2 \slash 3$, again by the selection rule of $hiePM$, we have $1 \slash 3 \leq \rho \leq 2 \slash 3$. Now we can lower bound the $EJS$ as
\begin{equation}
\begin{aligned}
     EJS(\.\pi(t), \gamma_h) &\stackrel{(a)}{\geq} JS(\.\pi(t), \gamma_h)\\
     & = \rho D \[( B^1  \Big\|  \rho B^0 + (1-\rho) B^1  \]) \\
     & \qquad + (1-\rho) D \[( B^0  \Big\|  \rho B^0 + (1-\rho) B^1  \])   \\
     &= I(\rho, p[l_{t+1}] ) \stackrel{(b)}{\geq} I( 1\slash 3 , p[l_{t+1}] )
\end{aligned}
\end{equation}
where (a) is by Fact \ref{factS:JS_EJS} and (b) is by the concavity of the mutual information with respect to the input distribution, the symmetric of $I(\rho, p[l_{t+1}] )$ around $\rho = 1\slash 2$ for symmetric channels, and together with $1\slash 3 \leq \rho \leq 2\slash 3$. This concludes the assertion.

\end{proof}

\begin{lemma}[\rv{Fact 1 in~\cite{Chiu2016}\footnote{This is one of the conference proceedings of this work}}] \label{factS:EJSori_sort_dya}
For both search strategies $sortPM$ and $dyaPM$ with resolution $1\slash \delta$ and reliability $\epsilon$, we have
\begin{align}
    &EJS(\.\pi(t),\gamma) \geq I(1\slash 2, p[\delta |S_{t+1}|]), \ \forall \ t\\
    &EJS(\.\pi(t),\gamma) \geq \tilde{\.\pi} C_1(p[\delta ]) ,\ \forall \max_i \.\pi_i \geq  \tilde{\pi} ,
\end{align}
where $\tilde{\pi} := 1- \frac{1}{1+\max \{ \log (1\slash \delta), \log (1\slash \epsilon) \}}.$
\end{lemma}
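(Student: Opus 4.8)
The plan is to reduce the EJS divergence to one-dimensional Bernoulli quantities and then exploit the posterior-matching rule, which pins the queried mass near $1/2$. Write $\rho := \pi_{S_{t+1}}(t)$ and $\bar\rho := 1-\rho$. Since the conditional output law $P_{y_{t+1}| i,S_{t+1}}$ equals $B^1 := \text{Bern}(1-p[\delta|S_{t+1}|])$ for $i\in S_{t+1}$ and $B^0 := \text{Bern}(p[\delta|S_{t+1}|])$ for $i\notin S_{t+1}$, the definition of EJS splits as $EJS = T_1 + T_2$, where $T_1 = \sum_{i\in S_{t+1}}\pi_i D(B^1\| P_{y_{t+1}|\neq i})$ and $T_2 = \sum_{i\notin S_{t+1}}\pi_i D(B^0\| P_{y_{t+1}|\neq i})$, each $P_{y_{t+1}|\neq i}$ being a two-point mixture of $B^0,B^1$. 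I would introduce $\psi(\gamma) := D(B^0\|\gamma B^1 + (1-\gamma)B^0)$; by the flip symmetry of the BSC one also has $\psi(\gamma) = D(B^1\|(1-\gamma)B^1 + \gamma B^0)$, so every summand of both $T_1$ and $T_2$ is a value of the single function $\psi$, which is convex and increasing on $[0,1]$ with $\psi(1/2) = I(1/2,p[\delta|S_{t+1}|])$.

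I would dispatch the reliability bound (the second inequality) first, exactly as in the proof of Lemma \ref{factS:EJSori_hiePM}. When $\max_i\pi_i(t)\geq\tilde\pi>1/2$, both $sortPM$ (its sorted median prefix is then the single largest bin) and $dyaPM$ (the finest bin holding that atom is already the smallest dyadic set of mass $\ge 1/2$) query the single dominant bin $\{i^*\}$. Retaining only the $i=i^*$ term and using $P_{y_{t+1}| i^*}=B^1$, $P_{y_{t+1}|\neq i^*}=B^0$ gives $EJS\geq \pi_{i^*}D(B^1\|B^0)\geq \tilde\pi\, C_1(p[\delta])$, since then $|S_{t+1}|=1$.

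For the rate bound I would record the single structural consequence of posterior matching that both strategies share: the \emph{crossing} bin $k_0$, at which the (sorted for $sortPM$, left-anchored contiguous for $dyaPM$) prefix mass first reaches $1/2$, satisfies $\pi_{\{k_0\}}\geq 2|\rho - 1/2|$, with $k_0\in S_{t+1}$ precisely when $\rho\geq 1/2$. Assuming WLOG $\rho\geq 1/2$ (the case $\rho\le 1/2$ follows by relabeling the output $y\mapsto 1-y$, which leaves EJS invariant and exchanges the roles of $T_1,T_2$), I bound the two sums using only monotonicity of $\psi$ and this crossing-bin estimate. On the majority-pointing side every $i\notin S_{t+1}$ has $B^1$-weight $\tfrac{\rho}{1-\pi_i}\geq\rho\geq 1/2$, so $T_2\geq\bar\rho\,\psi(\rho)$; on the other side the crossing bin has $B^0$-weight $\tfrac{\bar\rho}{1-\pi_{k_0}}\geq \tfrac{\bar\rho}{1-2(\rho-1/2)}=\tfrac12$ while the remaining $i\in S_{t+1}$ have $B^0$-weight $\tfrac{\bar\rho}{1-\pi_i}\geq\bar\rho$, giving $T_1\geq \pi_{k_0}\psi(1/2)+(\rho-\pi_{k_0})\psi(\bar\rho)$.

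Combining the two bounds yields $EJS - \psi(1/2) \geq (\rho-\pi_{k_0})[\psi(\bar\rho)-\psi(1/2)] + \bar\rho[\psi(\rho)-\psi(1/2)]$, in which the first bracket is $\le 0$ (as $\bar\rho\le 1/2$) and the second $\ge 0$. The main obstacle is closing this sign-indefinite expression, and it is resolved by combining the crossing-bin bound $\rho-\pi_{k_0}\leq\rho-2(\rho-1/2)=\bar\rho$ with the convexity of $\psi$: the former dominates the negative term by $\bar\rho[\psi(1/2)-\psi(\bar\rho)]$, after which $\psi(\rho)+\psi(\bar\rho)\geq 2\psi(1/2)$ (Jensen at the midpoint of $\rho$ and $\bar\rho$) gives $EJS\geq\psi(1/2)=I(1/2,p[\delta|S_{t+1}|])$. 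I expect the delicate point to be checking that this one argument is genuinely uniform across $sortPM$ and $dyaPM$ — that contiguity, rather than sorting, still forces $\pi_{\{k_0\}}\geq 2|\rho-1/2|$ and $\rho-\pi_{k_0}\geq 0$ — and handling the degenerate cases where the leading bin already carries mass at least $1/2$.
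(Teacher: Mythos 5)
Your proof is correct, and it follows essentially the same route as the paper, which does not spell out the argument but simply defers to [Proposition 3, Naghshvar--Javidi 2015]: the reduction of every EJS summand to the single convex increasing function $\psi(\gamma)=D\bigl(B^0\,\|\,\gamma B^1+(1-\gamma)B^0\bigr)$, the median-straddling-bin estimate $\pi_{k_0}\geq 2|\rho-1/2|$ forced by the argmin rule of both $sortPM$ and $dyaPM$, and the final midpoint-convexity step $\psi(\rho)+\psi(\bar\rho)\geq 2\psi(1/2)$ are exactly the ingredients of that deferred proof. The points you flag as delicate do check out: the crossing-bin bound uses only the monotone prefix structure (which contiguity provides for $dyaPM$ just as sorting does for $sortPM$), the output-relabeling reduction is legitimate because the rate bound depends only on the masses $\rho$ and $\pi_{k_0}$ and on $p[\delta|S_{t+1}|]$, and the degenerate case $\pi_{k_0}=\rho\geq 1/2$ is subsumed since then $\rho-\pi_{k_0}=0$.
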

\begin{proof}
The proof of Lemma~\ref{factS:EJSori_sort_dya} follows along the proof of [Proposition 3,~\cite{Naghshvar2015}].
\end{proof}

\section{\rv{Technical Lemmas}}

\begin{lemma}\label{factS:EJSdiff}
    The absolute difference between $U(\.\pi(t+1))$ and $U(\.\pi(t))$ is bounded by the entropy of $\.\pi(t)$, written as
    \begin{equation}
        |U(\.\pi(t+1)) - U(\.\pi(t))| \leq \log \frac{1-p_\text{min}}{p_\text{min}} + H(\.\pi(t)) +  e,
    \end{equation}
\end{lemma}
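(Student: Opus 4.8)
The plan is to track how a single Bayesian update changes the average log-likelihood $U$, using only the binary-symmetric structure of the observation model reviewed in Appendix \ref{sec:reEJS}. Conditioned on $S_{t+1}$, each likelihood $P_{y_{t+1}\mid i,S_{t+1}}$ takes one of the two values $1-p$ or $p$, with $p=p[\delta|S_{t+1}|]\in[p_\text{min},1/2]$, according to whether $\mathds{1}(i\in S_{t+1})$ matches the realized $Y_{t+1}$. Writing the Bayes update in log-odds form yields the identity
\begin{equation}
\log\frac{\pi_i(t+1)}{1-\pi_i(t+1)}=\log\frac{\pi_i(t)}{1-\pi_i(t)}+\log\frac{P_{y_{t+1}\mid i,S_{t+1}}}{P_{y_{t+1}\mid\neq i,S_{t+1}}},
\end{equation}
with $P_{y_{t+1}\mid\neq i,S_{t+1}}$ the posterior-averaged likelihood of Appendix \ref{sec:reEJS}. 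Substituting into the definition of $U$ and regrouping, I would split the increment as $U(\.\pi(t+1))-U(\.\pi(t))=T_1+T_2$, where
\begin{equation}
T_2:=\sum_i \pi_i(t+1)\log\frac{P_{y_{t+1}\mid i,S_{t+1}}}{P_{y_{t+1}\mid\neq i,S_{t+1}}},\qquad T_1:=\sum_i\big(\pi_i(t+1)-\pi_i(t)\big)\log\frac{\pi_i(t)}{1-\pi_i(t)}.
\end{equation}

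The term $T_2$ is the easy one and accounts for the first term of the bound. Since $P_{y_{t+1}\mid\neq i,S_{t+1}}$ is a convex combination of $p$ and $1-p$, each single-letter ratio lies in $[\,p/(1-p),\,(1-p)/p\,]$, so every summand is bounded in modulus by $\log\frac{1-p}{p}$; as $p\mapsto\log\frac{1-p}{p}$ is decreasing on $(0,1/2)$, $p\ge p_\text{min}$, and $\sum_i\pi_i(t+1)=1$, I get $|T_2|\le \log\frac{1-p_\text{min}}{p_\text{min}}$.

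The substantive work is bounding $|T_1|$ by $H(\.\pi(t))+e$. Here I would exploit that the update rescales the posterior by only two multipliers: all mass in the ``agreeing'' set $A:=\{i:\mathds{1}(i\in S_{t+1})=Y_{t+1}\}$ is multiplied by a common factor $r\ge1$ and all mass in $A^c$ by a common factor $s\le1$, so that $T_1=(r-1)S_A+(s-1)S_{A^c}$ with $S_A:=\sum_{i\in A}\pi_i(t)\log\frac{\pi_i(t)}{1-\pi_i(t)}$ and $S_{A^c}$ analogous. A short computation shows the excesses are proportional to the opposite group masses, $r-1\propto(1-\pi_A(t))$ and $1-s\propto\pi_A(t)$. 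Splitting each log-odds as $\log\pi_i(t)-\log(1-\pi_i(t))$, I would control the $\log\pi_i(t)$ pieces through $\sum_i(-\pi_i(t)\log\pi_i(t))=H(\.\pi(t))$ (using Gibbs' inequality, i.e.\ cross-entropy $\ge$ entropy, to absorb the reweighting), and the $-\log(1-\pi_i(t))$ pieces through the elementary bound $-x\log x\le 1/e$ together with $1-\pi_i(t)\ge 1-\pi_A(t)$ for $i\in A$, so that the prefactor $(1-\pi_A(t))$ cancels the near-unit blow-up of $-\log(1-\pi_i(t))$; the finitely many remaining $O(1)$ contributions are collected into the additive constant $e$.

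The main obstacle is precisely this last cancellation. Individually $S_A$ and $S_{A^c}$ are \emph{unbounded}: a near-deterministic coordinate $\pi_i(t)\to 1$ sends $\log\frac{\pi_i(t)}{1-\pi_i(t)}\to\infty$, so no term-by-term bound can succeed. The proof must keep the products $(r-1)S_A$ and $(1-s)S_{A^c}$ intact and verify that the group-mass prefactors $(1-\pi_A(t))$ and $\pi_A(t)$ vanish at exactly the rate needed to offset the diverging log-odds; checking that the residual is a genuine $O(1)$ and that the coefficient multiplying $H(\.\pi(t))$ stays controlled (rather than degrading into a $p_\text{min}$-dependent factor as the naive prefactor estimate $\tfrac{1-2p}{Z}$ would suggest) is the delicate step on which the whole argument turns.
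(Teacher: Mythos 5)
Your decomposition $U(\boldsymbol{\pi}(t+1))-U(\boldsymbol{\pi}(t))=T_1+T_2$ and the bound $|T_2|\le\log\frac{1-p_{\min}}{p_{\min}}$ are exactly the first step of the paper's proof. The gap is in $T_1$. Your group-level identity is correct: with $P_{y_{t+1}}=\pi_A(1-p)+(1-\pi_A)p$ one gets $r-1=\frac{(1-\pi_A)(1-2p)}{P_{y_{t+1}}}$ and $1-s=\frac{\pi_A(1-2p)}{P_{y_{t+1}}}$, hence
\begin{equation}
T_1=\frac{1-2p}{P_{y_{t+1}}}\Big((1-\pi_A)S_A-\pi_A S_{A^c}\Big).
\end{equation}
But even granting the cancellations you describe, which tame $(1-\pi_A)|S_A|$ and $\pi_A|S_{A^c}|$ to $H(\boldsymbol{\pi}(t))+O(1)$ each, the prefactor $\frac{1-2p}{P_{y_{t+1}}}$ can be as large as $\frac{1-2p_{\min}}{p_{\min}}$, so this route delivers a multiplicative, $p_{\min}$-dependent coefficient on the entropy, not the additive bound $H(\boldsymbol{\pi}(t))+e$ asserted by the lemma. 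You flag precisely this as ``the delicate step on which the whole argument turns,'' but you do not supply the step; as written, the proposal stops exactly where the proof has to do its work.

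The paper closes this term by a coordinatewise argument, which also contradicts your claim that ``no term-by-term bound can succeed.'' It never bounds $S_A$ or $S_{A^c}$ in isolation; it bounds $\sum_i|\pi_i(t+1)-\pi_i(t)|\,\big|\log\frac{\pi_i(t)}{1-\pi_i(t)}\big|$ term by term via the inequality $|\pi_i(t+1)-\pi_i(t)|\le\pi_i(t)(1-\pi_i(t))$ (quoted as Lemma 6 of Naghshvar et al.), after which $\pi_i(1-\pi_i)\log\frac{1-\pi_i}{\pi_i}\le\pi_i\log\frac{1}{\pi_i}$ for coordinates with $\pi_i<\frac{1}{2}$ (summing to at most $H(\boldsymbol{\pi}(t))$) and $\pi_i(1-\pi_i)\log\frac{\pi_i}{1-\pi_i}\le(1-\pi_i)\log\frac{1}{1-\pi_i}\le\max_x x\log\frac{1}{x}$ for the at most one coordinate with $\pi_i\ge\frac{1}{2}$ finish the argument. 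Note that the exact identity $\pi_i(t+1)-\pi_i(t)=\pi_i(t)(1-\pi_i(t))\frac{P_{y_{t+1}|i}-P_{y_{t+1}|\neq i}}{P_{y_{t+1}}}$ carries the same $1/P_{y_{t+1}}$ factor you are worried about, so the entire difficulty of the lemma is concentrated in that single imported per-coordinate bound; to complete your proof you would need to establish it (or an equivalent control of the prefactor) rather than defer it.
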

\begin{proof}
\begin{equation}
\begin{aligned}
    &|U(\.\pi(t+1)) - U(\.\pi(t))| \\
    &= \sum_{i=1}^{1\slash \delta} \pi_i(t+1) \log \frac{\pi_i(t+1)}{1-\pi_i(t+1)} - \sum_{i=1}^{1\slash \delta} \pi_i(t) \log \frac{\pi_i(t)}{1-\pi_i(t)} \\
                  &\leq   \sum_{i=1}^{1\slash \delta} \pi_i(t+1)  \[| \log \frac{\pi_i(t+1)}{1-\pi_i(t+1)} - \log \frac{\pi_i(t)}{1-\pi_i(t)}   \]| \\
                  & \qquad + \sum_{i=1}^{1\slash \delta} | \pi_i(t+1) -  \pi_i(t) |  \[|  \log  \frac{\pi_i(t)}{1-\pi_i(t)}  \]|\\
                  &  \stackrel{(a)}{\leq}  \log \frac{1-p_\text{min}}{p_\text{min}} + \sum_{\pi_i(i)<\frac{1}{2}} \pi_i(t) (1-\pi_i(t)) \log \frac{1-\pi_i(t)}{\pi_i(t)} \\
                  & \qquad +  \sum_{\pi_i(i)\geq \frac{1}{2}} \pi_i(t) (1-\pi_i(t)) \log \frac{\pi_i(t)}{1-\pi_i(t)} \\
                  & \leq \log \frac{1-p_\text{min}}{p_\text{min}} + H(\.\pi(t)) + \max_x x \log \frac{1}{x},
\end{aligned}
\end{equation}
where (a) is by lemma 6 in \cite{Naghshvar2015} and that $p[|S_t|] \geq p_\text{min}$.
\end{proof}

\begin{lemma}[Azuma's Inequality, Lemma 1 in~\cite{Chiu2018}\footnote{This is one of the conference proceedings of this work}] \label{lemmaS:azuma}
Given a submartingale $U(t)$ with $U(0)<0$ with respect to another random process $\.\pi(t)$. If $U(t)$ has bounded difference, $i.e.$ $|U(t+1)-U(t)|<B$ for some $B \in \mathbb{R}^+$, and that the expected difference is strictly positive, i.e. 
\begin{equation}
    \expect{U(t+1) - U(t)| \.\pi(t) } \geq K > 0,
\end{equation}
then we have
\begin{equation}
     \mathbb{P}( U(t) < 0 ) < k e^{- t \frac{K^2 }{ 2 (B+K)^2 }  }  \quad \forall t > \frac{-U(0)}{K}
    \label{all_less}
\end{equation}
where $k =  e^{ -\frac{K U(0) }{ (B+K)^2 } }$.
\end{lemma}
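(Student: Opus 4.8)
The statement I want to prove is a one-sided large-deviation bound on the probability that a positively-drifted, bounded-difference submartingale $U(t)$ starting below zero remains below zero at time $t$. My plan is the standard Chernoff/exponential-martingale route, adapted to the submartingale (not martingale) setting. The key observation is that because the conditional increments have mean at least $K>0$, the centered process $M(t) := U(t) - U(0) - \sum_{s<t}\expect{U(s+1)-U(s)\mid\.\pi(s)}$ is a martingale with increments bounded by $B+K$ in absolute value (since $|U(t+1)-U(t)|<B$ and the subtracted conditional mean lies in $[K, B]$, the centered increment stays within $[-(B+K), B]$, hence in magnitude at most $B+K$). Equivalently, I can work directly with the supermartingale-type bound on the event $\{U(t)<0\}$.

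\medskip

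The concrete steps I would carry out are as follows. First, I would rewrite the target event: since $\expect{U(s+1)-U(s)\mid\.\pi(s)}\geq K$, we have $U(t) \geq U(0) + M(t) + Kt$ where $M(t)$ is the martingale above with $M(0)=0$. Hence $\{U(t)<0\}\subseteq\{M(t) < -U(0)-Kt\}$, and for $t>-U(0)/K$ the right-hand threshold $-U(0)-Kt$ is negative (recall $U(0)<0$, so $-U(0)>0$ and the threshold $-U(0)-Kt$ becomes negative once $t>-U(0)/K$). Second, I would apply the exponential Markov inequality to the martingale with bounded increments: for any $\lambda>0$,
\begin{equation}
\prob{M(t) < -U(0)-Kt} \leq e^{\lambda(-U(0)-Kt)}\,\expect{e^{-\lambda M(t)}}.
\end{equation}
Third, using boundedness $|M(s+1)-M(s)|\leq B+K$ together with the Hoeffding lemma (the conditional moment generating function of a mean-zero increment bounded by $c$ satisfies $\expect{e^{-\lambda\Delta}\mid\mathcal F_s}\leq e^{\lambda^2 c^2/2}$), iterated conditioning gives $\expect{e^{-\lambda M(t)}}\leq e^{\lambda^2 (B+K)^2 t/2}$. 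Fourth, I would combine these to obtain
\begin{equation}
\prob{U(t)<0} \leq \exp\!\left(-\lambda\big(U(0)+Kt\big) + \tfrac{\lambda^2 (B+K)^2 t}{2}\right),
\end{equation}
and then optimize over $\lambda$. Choosing $\lambda = K/(B+K)^2$ (rather than the unconstrained optimizer) yields the clean exponent $-t K^2/(2(B+K)^2)$ in the $t$-linear term and the prefactor $k=e^{-KU(0)/(B+K)^2}$ coming from the $-\lambda U(0)$ term, matching the stated constants exactly.

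\medskip

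The main subtlety — and the step I would be most careful about — is the bookkeeping between the submartingale $U$ and its centered martingale $M$: I must verify that the centered increments are genuinely bounded by $B+K$ (not just $B$) so that the Hoeffding bound applies with constant $(B+K)^2$, and I must confirm that the chosen $\lambda=K/(B+K)^2$ reproduces both the exponent $K^2/(2(B+K)^2)$ and the precise prefactor $k=e^{-KU(0)/(B+K)^2}$ stated in \eqref{all_less}. Everything else is routine: the Hoeffding lemma and the iterated-conditioning argument are standard, and the restriction $t>-U(0)/K$ is exactly what guarantees the threshold $U(0)+Kt>0$ so that $-\lambda(U(0)+Kt)<0$ and the bound is nontrivial. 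Since the statement cites this as Lemma 1 of the conference version~\cite{Chiu2018}, I expect the proof to be essentially a self-contained invocation of this Chernoff-plus-Hoeffding computation.
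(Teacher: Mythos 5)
Your overall route is the same as the paper's: the paper applies the one-sided Azuma inequality to the shifted process $U(t)-tK$, which remains a submartingale with increments bounded in magnitude by $B+K$, and then expands the square $(U(0)+tK)^2$ in the resulting exponent and drops the harmless term $U(0)^2\slash(2t(B+K)^2)$ --- which is exactly your choice $\lambda=K\slash(B+K)^2$ in disguise. So unpacking Azuma into a Chernoff step plus an iterated Hoeffding-lemma bound is fine and reproduces the stated constants.

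There is, however, one incorrect step in your bookkeeping, precisely at the point you flagged as the main subtlety. If you center by the Doob compensator, i.e.\ subtract the actual conditional means $\mu_s:=\expect{U(s+1)-U(s) | \.\pi(s)}\in[K,B)$, then the centered increment $\Delta_s-\mu_s$ lies in $(-2B,\,B-K)$, \emph{not} in $[-(B+K),\,B]$: the hypothesis gives $\mu_s\ge K$, not $\mu_s\le K$, so $\mu_s$ can be as large as $B$ and the lower end is only $-B-\mu_s\ge -2B$, with $-2B\le -(B+K)$ whenever $K\le B$ (which the hypotheses force). Hence the magnitude bound $B+K$ you assert for $|M(s+1)-M(s)|$ fails in general. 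Two clean repairs: (i) do what the paper does and subtract the constant $Kt$ rather than the compensator --- then $U(t)-tK$ is still a submartingale, its increments genuinely satisfy $|\Delta_s-K|<B+K$, and the one-sided Azuma bound applies with constant $(B+K)^2$; or (ii) keep your centering but invoke Hoeffding's lemma in its range form (a conditional range of width $2B$ gives $\expect{e^{-\lambda(\Delta_s-\mu_s)} | \.\pi(s)}\le e^{\lambda^2B^2\slash 2}$), which is in fact sharper than $e^{\lambda^2(B+K)^2\slash 2}$ and a fortiori yields the stated bound. With either repair, the rest of your argument --- the event inclusion $\{U(t)<0\}\subseteq\{M(t)<-U(0)-Kt\}$, the exponential Markov inequality, and the choice $\lambda=K\slash(B+K)^2$ --- goes through exactly as written and matches the claimed prefactor and exponent.
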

\begin{proof}
By the positive drift, $U(t) - tK$ is also a submartingale with bounded difference  
\begin{equation}
    |U(t+1) - (t+1)K - (U(t) - tK)  | \leq  B+K,
\end{equation} for all $t\geq 0$. Applying Azuma's inequality \cite{Chung2006} on $U(t) - tK$, we have
\begin{equation}
\begin{aligned}
    &\mathbb{P}( U(t) < 0 ) \\
    &= \mathbb{P} \big( U(t) -  tK - U(0) < -   U(0) - t K   \big) \\
    & \leq \exp\[( -  \frac{ ( U(0) + tK )^2}{ 2t (B+K)^2  }  \]) \\
    & = \exp\[(- \frac{K^2 t}{ 2 (B+K)^2 }   \]) \exp \[( -\frac{K U(0) }{ (B+K)^2 } \])\\
    & \qquad \times \exp \[( - \frac{( U(0) )^2}{ 2t (B+K)^2 } \]) \\
    & \leq   e^{ -\frac{K U(0) }{ (B+K)^2 } - \frac{K^2 }{ 2 (B+K)^2 } t  } 
\end{aligned}
\end{equation}
for $t> \frac{-U(0)}{K}$, concluding the results.
\end{proof}

\begin{lemma} \label{lemmaS:EJSsortPM}
Using $sortPM$ with resolution $\delta$, the coarse binned sorted  log-likelihood $U_\alpha(t)$ defined by (\ref{eqS:alpha_pos}) and (\ref{eqS:alpha_U}) is a submartigale with respect to $\.\pi(t)$. In particular, we have 
\begin{equation}
\begin{aligned}
&\expect{ U_\alpha(t+1)  | \.\pi(t)} - U_\alpha(t) \geq  K_s  :=  \\
& \max \[\{  \frac{1}{2} D \[( \frac{1}{4} B_1 + \frac{3}{4} B_0 \Big\| B_0 \])  ,   \frac{1}{8} D\[( B_1 \Big \| \frac{3}{4} B_1 + \frac{1}{4} B_0 \]) \]\}
\end{aligned}
\end{equation}
for all $t>0$ where $B_1 = \text{Bern}(1-p[1\slash 2])$ and $B_0 = \text{Bern}(p[1\slash 2])$. 
\end{lemma}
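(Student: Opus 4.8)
The plan is to establish the required strictly positive drift by transporting the average log-likelihood analysis from the fine posterior to the coarse, sorted posterior $\.\pi^{\alpha}$, reusing the EJS drift identity reviewed in Appendix~\ref{sec:reEJS}, and then lower bounding the resulting \emph{coarse} EJS by exploiting the near-median property of $sortPM$. Since $K_s>0$, the submartingale claim follows immediately once the drift lower bound is in hand, so the whole content is the drift estimate.

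First I would dispose of the time-varying sorting, which is the reason $U_\alpha$ is not literally a coarsened copy of $U$. The functional $x\mapsto x\log\frac{x}{1-x}$ is convex on $(0,1)$, hence $\sum_q f(\cdot)$ is Schur-convex; and grouping the $\alpha/\delta$ largest sorted masses together produces a bin-sum vector that majorizes the bin-sums of any other grouping of $\.\pi(t+1)$. Therefore the sorted binning maximizes $U_\alpha$, giving $U_\alpha(t+1)\ge U_\alpha^{\sigma_t}(\.\pi(t+1))$, where $U_\alpha^{\sigma_t}$ freezes the binning at the time-$t$ sorting. It then suffices to bound the drift of the frozen-binning functional, since $U_\alpha^{\sigma_t}(\.\pi(t))=U_\alpha(t)$. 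With the binning frozen, the coarse posterior obeys a clean binary Bayes update: writing $P_{y\mid q}$ for the conditional law of $Y_{t+1}$ given $\theta\in\mathrm{bin}(q)$ (a mixture $\beta_q B_1+(1-\beta_q)B_0$ whose weight is the fraction of $\mathrm{bin}(q)$ inside $S_{t+1}$), the update numerator $\sum_{i\in\mathrm{bin}(q)}\pi_i(t)P(y\mid X_i)$ factors as $\pi_q^\alpha(t)P_{y\mid q}(y)$, so the EJS drift identity of Appendix~\ref{sec:reEJS} applies verbatim to the super-bins and yields $\expect{U_\alpha^{\sigma_t}(\.\pi(t+1))\mid\.\pi(t)}-U_\alpha(t)=\sum_q \pi_q^\alpha(t)\,D(P_{y\mid q}\|P_{y\mid\neq q})\ge\sum_q \pi_q^\alpha(t)\,D(P_{y\mid q}\|P_{y})$, the last inequality by the JS bound of Fact~\ref{factS:JS_EJS}. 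Because $p$ is non-decreasing and $\delta|S_{t+1}|\le 1/2$, each divergence only shrinks when the true noise is replaced by the worst case, so I may fix $B_1=\mathrm{Bern}(1-p[1/2])$ and $B_0=\mathrm{Bern}(p[1/2])$ throughout.

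The heart of the argument is to bound this coarse divergence below by a universal constant using the selection rule of Algorithm~\ref{algS:sortPM}. Since $k^\ast$ matches the sorted partial sums to $1/2$ and those partial sums have non-increasing increments, the query probability $\beta^\ast=\pi^{\downarrow}_{[1,k^\ast]}(t)$ cannot straddle $1/2$ by more than half of the largest mass, which forces $\beta^\ast\in[1/4,3/4]$ whenever no single fine bin already exceeds $1/2$. I would then split on the top super-bin mass $\pi_1^\alpha(t)$. When $\pi_1^\alpha(t)\ge 1/2$ the query lies inside $\mathrm{bin}(1)$ and every other super-bin is purely outside, so $P_{y\mid\neq 1}=B_0$ and the straddling contribution $\pi_1^\alpha(t)\,D\big(\beta_1 B_1+(1-\beta_1)B_0\,\|\,B_0\big)$, with $\beta_1=\beta^\ast/\pi_1^\alpha(t)\ge 1/4$ and $\pi_1^\alpha(t)\ge 1/2$, is at least $\tfrac12 D(\tfrac14 B_1+\tfrac34 B_0\|B_0)$ by monotonicity of the divergence in the in-fraction. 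When $\pi_1^\alpha(t)<1/2$ the query strictly contains $\mathrm{bin}(1)$, so the fully-interior mass $A$ (super-bins with $P_{y\mid q}=B_1$) satisfies $A\ge 1/8$ — if $\pi_1^\alpha\ge 1/8$ then $A\ge\pi_1^\alpha$, otherwise $A\ge\beta^\ast-\pi_1^\alpha\ge 1/4-1/8$ — while each interior bin sees a reference of in-fraction at most $3/4$, yielding at least $\tfrac18 D(B_1\|\tfrac34 B_1+\tfrac14 B_0)$. The two regimes thus furnish exactly the two expressions appearing inside $K_s$.

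The main obstacle, I expect, is promoting these two per-regime estimates to the stated bound by their \emph{maximum}. A naive case analysis only delivers the smaller of the two, so the delicate point is to show that in every configuration the assembled interior/exterior/straddling contributions dominate $K_s$ uniformly — in particular that the constraints linking $\beta^\ast\in[1/4,3/4]$, $\pi_1^\alpha$, and the interior mass $A$ rule out the simultaneously-worst values and make the tight worst-case optimization (which pins down the constants $1/4$, $3/4$, $1/8$) coincide with the claimed maximum. Correctly accounting for the unique straddling super-bin that the query cuts through, and verifying that the substitution of $p[1/2]$ is genuinely conservative for the mixture divergences $D(\beta B_1+(1-\beta)B_0\,\|\,\cdot\,)$, are the technical steps I would handle most carefully.
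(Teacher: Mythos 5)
Your proposal follows essentially the same route as the paper's own proof: freeze the time-$t$ sorting (by Schur-convexity of $U$ and the fact that grouping the largest sorted masses majorizes any other grouping) so that the drift of $U_\alpha$ dominates the drift of the frozen-binning functional, identify that drift with a coarse EJS over the super-bins, lower bound it via the JS divergence, and then split on whether the query fits inside the top super-bin, using the near-median property $\pi^{\downarrow}_{[1,k^*]}(t)\in[1/4,3/4]$. The two per-case constants you extract are exactly the two expressions in $K_s$; your interior-mass bound of $1/8$ in the second case is looser than the paper's, which derives $\pi^{\alpha}_{[1,q^*-1]}\geq 1/4$ from $\pi^{\alpha}_{[1,q^*]}\geq 1/2$ together with sortedness of the super-bin masses, but $1/8$ is in fact the constant appearing in the lemma statement (the paper's proof and statement disagree on this factor). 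The one point you flag as the main obstacle --- that a case analysis only yields the \emph{minimum} of the two expressions, not the maximum --- is a genuine issue, and the paper does not resolve it either: its proof establishes one bound per case and then simply defines $K_s$ as the max. The honest conclusion of either argument is the drift bound with $\min$ in place of $\max$; this costs nothing downstream, since the only property of $K_s$ used in the proof of Theorem~\ref{thmS:sortPM} is strict positivity (it enters only through the Azuma exponent and the threshold on $\alpha$).
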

\begin{proof}
Let $\sigma_t$ be the permutation such that $\sigma_t(\.\pi(t)) = \.\pi^{\downarrow}(t)$. To emphasize the effect of the different permutations at different time $t$, for a given permutation $\sigma$ we define $\.\pi^{\sigma}(t) := \sigma (\.\pi(t))$ and 
\begin{equation}
\begin{aligned}
     U_{\alpha}^\sigma(t) &:= U(\.\pi^{\alpha,\sigma}) \\
     &=\sum_{q=1}^{1\slash\alpha} \pi_{q}^{\alpha,\sigma}(t) \log \frac{\pi_{q}^{\alpha,\sigma}(t)}{1-\pi_{q}^{\alpha,\sigma}(t)},
\end{aligned}
\end{equation}
where
\begin{equation} 
   \pi_{q}^{\alpha,\sigma}(t) := \sum_{i\in \text{bin}(q)} \pi^{\sigma}_i(t), \ q = 1,2,..., 1\slash \alpha.
\end{equation}
By definition, we have $U_{\alpha}(t) \equiv U_{\alpha}^{\sigma_t}(t)$. Now, we can lower bound the expected drift as
\begin{equation} \label{eqS:sortPM_EJS}
    \begin{aligned}
        & \expect{ U_{\alpha}  (t+1) |  \.\pi(t)} - U_\alpha(t)  \\
        & = \expect{ U_{\alpha}^{\sigma_{t+1}} (t+1) |  \.\pi(t)} - U_\alpha^{\sigma_t}(t)\\
        & \overset{(a)}{\geq} \expect{ U_{\alpha}^{\sigma_t} (t+1) |  \.\pi(t) } - U_\alpha^{\sigma_t}(t) \\
        &\overset{(b)}{=}  \sum_{q=1}^{1\slash \alpha} \pi_{q}^{\sigma_t,\alpha}(t) D\[(   P^{\sigma_t}_{y_{t+1}|  \text{bin}(q) , S_{t+1} }   \Big\| P^{\sigma_t}_{y_{t+1}| \notin  \text{bin}(q), S_{t+1} } \]),
    \end{aligned}
\end{equation}
where
\begin{equation}
\begin{aligned}
       P^{\sigma_t}_{y_{t+1}|  \text{bin}(q) , S_{t+1}} &:=\frac{ 1 }{ \pi_{q}^{\alpha,\sigma_t}(t) } \sum_{i \in \text{bin}(q)} \pi_i^{\sigma_t}(t) P_{y_{t+1}|\sigma_t(i),S_{t+1}} \\
       P^{\sigma_t}_{y_{t+1}|  \notin \text{bin}(q), S_{t+1}} &:=\sum_{q' \neq q} \frac{\pi_{q'}^{\sigma_t,\alpha}(t)}{1-\pi_{q}^{\sigma_t,\alpha}(t)} P^{\sigma_t}_{y_{t+1}|  \text{bin}(q), S_{t+1}} 
\end{aligned}
\end{equation}
and $P_{y_{t+1}|\cdot,S_{t+1}}$ is as defined in (\ref{eqS:Pyt_original}). Here the inequality $(a)$ follows from $\.\pi^{\sigma_t,\alpha}(t+1) \succeq  \.\pi^{\sigma_{t+1},\alpha}(t~+~1~)$ and that $U(\.\pi)$ is Schur-convex with respect to $\.\pi$. And $(b)$ is a similar manipulation using Bayes's rule as was done in the proof of [Theorem 1 in~\cite{Naghshvar2015}]. 

We now further lower bound (\ref{eqS:sortPM_EJS}) by positivity and convexity of the KL divergence. We seperate the discussion into two cases:
\begin{enumerate}
    \item If $q^* = 1$:
    
    By the selection rule of $k^*$ in $sortPM$, we have $\pi_1^{\alpha}(t) \geq 1\slash 2$ and $\pi^{\sigma_t}_{[1,k^*]}(t) \geq 1\slash 4$. Therefore, 
    \begin{equation}
    \begin{aligned} \label{eqS:sortPM_case1}
       & (\ref{eqS:sortPM_EJS})  \geq  \\
       &\sum_{q=1}^{1\slash \alpha} \pi_{q}^{\sigma_t,\alpha}(t) D\[(   P^{\sigma_t}_{y_{t+1}|  \text{bin}(q) , S_{t+1} }   \Big\| P^{\sigma_t}_{y_{t+1}| \notin  \text{bin}(q), S_{t+1} } \]) \\
      & \overset{(c)}{\geq}  \pi_1^{\sigma_t,\alpha}(t)  D\[(   \frac{\pi^{\sigma_t}_{[1,k^*]}(t)}{\pi_1^{\alpha}(t)} B_1 + \frac{\pi^{\sigma_t}_{[k^*+1,\frac{\alpha}{\delta}]}(t)}{\pi_1^{\alpha}(t)} B_0   \ \Bigg\| \ B_0 \]) \\
      & \overset{(d)}{\geq}  \frac{1}{2} D \[( \frac{1}{4} B_1 + \frac{3}{4} B_0 \Big\| B_0 \]),
    \end{aligned}
    \end{equation}
    where (c) is by positivity of KL divergence and (d) is by  $\pi_1^{\alpha}(t) \geq 1\slash 2$ and $\pi^{\sigma_t}_{[1,k^*]}(t) \geq 1\slash 4$.

    \item If $q^*> 1$:
    
    By the selection rule of $k^*$ in $sortPM$, we have $\pi^{\sigma_t}_{[1,k^*]}(t) \leq 3\slash 4$. 
    
    WLOG, we assume that $k^*< \max \text{bin}(q^*)$ otherwise it reduces to the case of Lemma \ref{factS:EJSori_sort_dya}. Together with the selection rule of $k^*$, we have $\pi_{[1,q^*]}^{\sigma_t,\alpha} (t) \geq \frac{1}{2}$. By sorting we aslo have $\pi_{[1,q^*-1]}^{\sigma_t,\alpha} (t) \geq \pi_{q^*}^{\sigma_t,\alpha} (t)$. Therefore $\pi_{[1,q^*-1]}^{\sigma_t,\alpha} (t) \geq \frac{1}{4}$.  
    
    Now can proceed the lower bound as 
    \begin{equation}
    \begin{aligned} \label{eqS:sortPM_case2}
       & (\ref{eqS:sortPM_EJS}) \\
       &\geq \sum_{q=1}^{1\slash \alpha} \pi_{q}^{\sigma_t,\alpha}(t) D\[(   P^{\sigma_t}_{y_{t+1}|  \text{bin}(q) , S_{t+1} }   \Big\| P^{\sigma_t}_{y_{t+1}| \notin  \text{bin}(q), S_{t+1} } \]) \\
      & \overset{(e)}{\geq}   \pi_{[1,q^*-1]}^{\sigma_t,\alpha} (t) D( B_1 \|   \pi_{[1,k^*]}^{\sigma_t} B_1  + \pi_{[k^*+1,\frac{1}{\delta}]}^{\sigma_t} B_0) \\
      & \overset{(f)}{\geq}  \frac{1}{4} D\[( B_1 \Big \| \frac{3}{4} B_1 + \frac{1}{4} B_0 \]),
    \end{aligned}
    \end{equation}
    where $(e)$ is by Fact \ref{factS:JS_EJS} and positivity of the KL divergence, and (f) is from $\pi_{[1,q^*-1]}^{\sigma_t,\alpha} (t) \geq \frac{1}{4}$ and $\pi^{\sigma_t}_{[1,k^*]}(t) \leq 3\slash 4$. 
\end{enumerate}
Now let 
\begin{equation}
\begin{aligned}
    & K_s:= \\
    & \max \[\{  \frac{1}{2} D \[( \frac{1}{4} B_1 + \frac{3}{4} B_0 \Big\| B_0 \])  ,   \frac{1}{4} D\[( B_1 \Big \| \frac{3}{4} B_1 + \frac{1}{4} B_0 \]) \]\},
\end{aligned}
\end{equation}
and by (\ref{eqS:sortPM_case1}) and (\ref{eqS:sortPM_case1}), we conclude the assertion of this lemma.
\end{proof}

\begin{lemma} \label{lemmaS:EJShiePM}
Using $hiePM$ with resolution $\frac{1}{\delta}$, the nested log-likelihood $U^{\{l\}}(t)$ of lower resolution level $l<\log_2(1\slash \delta))$ defined in (\ref{eqS:nestedU}) is a submartigale. In particular, we have 
\begin{equation}
\begin{aligned}
    & \expect{ U^{\{l\}}(t+1) | \.\pi(t) } -  U^{\{l\}}(t)\\
    &\geq K_h := \min \Bigg\{  I \Big(\frac{1}{3},p[\frac{1}{2}] \Big), \frac{2}{3} D\Big(\frac{1}{3} B_1 + \frac{2}{3} B_0 \Big\| B_0 \Big) \Bigg\} \\
\end{aligned}
\end{equation}
for all $t>0$, for any $l<S$, where $B_1 = \text{Bern}(1-p[1\slash 2])$ and $B_0 = \text{Bern}(p[1\slash 2])$. 
\end{lemma}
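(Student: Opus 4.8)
The goal is to show that under $hiePM$, the nested average log-likelihood $U^{\{l\}}(t)$ at any fixed coarse resolution level $l < L$ is a submartingale with a strictly positive drift lower bound $K_h$. The plan is to mirror the structure of the proof of Lemma~\ref{lemmaS:EJSsortPM}, but to exploit the nested (dyadic) binning in \eqref{eqS:nestedpi}--\eqref{eqS:nestedU} rather than the sorted binning. First I would identify the drift as an EJS-type divergence on the coarse-binned posterior $\.\pi^{\{l\}}(t)$. Specifically, applying the submartingale-drift identity from Appendix~\ref{sec:reEJS} to the coarsened posterior and using a Bayes'-rule manipulation as in [Theorem~1,~\cite{Naghshvar2015}], I would write
\begin{equation}
\expect{ U^{\{l\}}(t+1) \mid \.\pi(t)} - U^{\{l\}}(t) \geq \sum_{q=1}^{2^l} \pi_q^{\{l\}}(t)\, D\Big( P_{y_{t+1}\mid \text{bin}(q), S_{t+1}} \Big\| P_{y_{t+1}\mid \notin\text{bin}(q), S_{t+1}} \Big),
\end{equation}
where the conditional output laws are the appropriate mixtures of $B_0,B_1$ induced by whether the query set $S_{t+1}$ overlaps $\text{bin}(q)$. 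Unlike the $sortPM$ case, here the coarse bins are fixed dyadic intervals, so no Schur-convexity/permutation argument is needed; the binning is static in time, which actually simplifies this step relative to Lemma~\ref{lemmaS:EJSsortPM}.

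**The case analysis.**

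The key is to lower-bound that sum by retaining a single well-chosen term and controlling the relevant posterior masses via the $hiePM$ selection rule. Let $\rho = \pi_{H^{m_{t+1}}_{l_{t+1}}}(t)$ be the posterior mass of the queried node. The selection rule forces $1/3 \le \rho \le 2/3$ except when the query collapses to a singleton leaf (mass $>2/3$). I expect two regimes. When the queried node sits at level $l_{t+1} \le l$ (coarser than or equal to the binning level), the query set is a union of whole coarse bins, so $S_{t+1}$ is "measurable" with respect to the coarse partition; then the drift reduces to a genuine mutual-information term $I(\rho, p[\cdot])$ which, by concavity and symmetry of $I(\cdot,p)$ about $1/2$ together with $1/3\le\rho\le 2/3$, is bounded below by $I(1/3, p[1/2])$. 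When the queried node is finer ($l_{t+1} > l$), the query set splits a single coarse bin; here I would isolate the coarse bin $q^*$ that contains more than half the mass (guaranteed since $\max_q \pi_q^{\{l\}}(t) \ge 1/2$ is exactly the complement of the event we are bounding), and lower-bound its contribution by a KL term of the form $\tfrac{2}{3} D(\tfrac{1}{3}B_1 + \tfrac{2}{3}B_0 \| B_0)$, using the worst-case split fractions dictated by the $2/3$ threshold and the monotonicity of $D(B_0 \| \alpha B_1 + (1-\alpha)B_0)$ in $\alpha$.

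**Assembling the bound and the main obstacle.**

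Taking the minimum over the two regimes yields $K_h = \min\{ I(1/3, p[1/2]),\ \tfrac{2}{3} D(\tfrac{1}{3}B_1 + \tfrac{2}{3}B_0 \| B_0)\}$, and replacing $p[l_{t+1}]$ by the worst-case $p[1/2]$ (legitimate since $I(\rho,p[\cdot])$ and the KL terms are monotone in the noise level and $p$ is non-decreasing in size) gives the stated noise-independent constant. Positivity of $K_h$ follows because $p[1/2] < 1/2$ makes $B_0 \ne B_1$, so both divergences are strictly positive. The main obstacle I anticipate is the bookkeeping in the finer-node regime ($l_{t+1} > l$): one must verify that the fraction of the heavy coarse bin's mass landing inside the queried set, and hence the mixture coefficient in the retained KL term, is controlled uniformly by the $hiePM$ thresholds rather than by the arbitrary intra-bin distribution of the posterior. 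Getting the worst-case mixture $\tfrac{1}{3}B_1 + \tfrac{2}{3}B_0$ to come out correctly requires carefully tracking how the $\argmax_m$ and the $1/2$-matching steps constrain the mass of the queried child relative to its parent coarse bin — this is the delicate part, whereas the divergence-monotonicity inequalities and the final minimization are routine.
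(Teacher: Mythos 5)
Your proposal follows essentially the same route as the paper's proof: the drift is identified with the EJS of the coarse dyadic posterior (no sorting or Schur-convexity needed, as you note), and the bound is obtained by splitting on whether the queried node is at level $l_{t+1}\le l$ (query measurable with respect to the level-$l$ partition, so the JS/mutual-information bound $I(1/3,p[1/2])$ of Lemma~\ref{factS:EJSori_hiePM} applies) or $l_{t+1}>l$ (retain the single term for the coarse bin containing $S_{t+1}$ and use $\rho=\pi_{S_{t+1}}\in[1/3,2/3]$ to reach $\tfrac{2}{3}D(\tfrac{1}{3}B_1+\tfrac{2}{3}B_0\,\|\,B_0)$), exactly matching the paper's two cases and constants. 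One correction: the fact that the coarse bin containing $S_{t+1}$ carries mass at least $1/2$ must be derived from the $hiePM$ selection rule --- when $l_{t+1}>l$ that bin contains $H^{m^*_t}_{l^*_t}$, whose mass is at least $1/2$ by definition of $l^*_t$ --- and not from restricting attention to the complement of the event $\{\max_q\pi^{\{l\}}_q(t)<1/2\}$, since the drift bound must hold for \emph{every} posterior state for Azuma's inequality (Lemma~\ref{lemmaS:azuma}) to be applicable downstream. For the delicate step you flag, the clean way to control the retained term $\pi^{\{l\}}_{q_t}\,D\bigl(\tfrac{\rho}{\pi^{\{l\}}_{q_t}}B_1+(1-\tfrac{\rho}{\pi^{\{l\}}_{q_t}})B_0\,\|\,B_0\bigr)$ is to observe that $h(\lambda):=D(\lambda B_1+(1-\lambda)B_0\|B_0)$ is convex with $h(0)=0$, so $h(\lambda)/\lambda$ is nondecreasing and hence $x\,h(\rho/x)\ge h(\rho)\ge h(1/3)\ge\tfrac{2}{3}h(1/3)$ for any $x\le 1$, which yields the stated constant without needing the (unjustified) claim that $\pi^{\{l\}}_{q_t}>2/3$.
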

\begin{proof}
Given any $l<S$, if the selected codeword $D^{(l_{t+1})}$ is such that $l_{t+1}\leq l$, by Fact \ref{factS:EJSori_hiePM} we conclude the results. If otherwise $l_{t+1}>l$, then we have $D^{(l_{t+1})} \subseteq \text{bin}(q_t) $ for some $q_t$.  For notational simplicity, let 
    $\rho \equiv \pi_{D^{(l_{t+1})}}(t) := \sum_{i\in D^{(l_{t+1})}} \pi_i(t)$
and $B^0 \equiv \text{Bern}(p[2^{-l_{t+1}}])$, $B^1 \equiv \text{Bern}(1-p[2^{-l_{t+1}}])$. We have
\begin{equation} 
\begin{aligned}
     &\expect{ U^{\{l\}}(t+1) | \.\pi(t) } -  U^{\{l\}}(t) \\
     &= \sum_{q=1}^{2^l} \pi^{\{l\}}_q(t) D\[( P_{y_{t+1}|  q , \gamma} \Big\| P_{y_{t+1}| \neq q, \gamma } \]) \\
     &\stackrel{(a)}{\geq} \frac{2}{3} D( \rho B^1 + (1-\rho) B^0 \| B^0 ) \stackrel{(b)}{\geq} \frac{2}{3} D( \frac{1}{3} B^1 + \frac{2}{3} B^0 \| B^0 ) \\
     &\geq \frac{2}{3} D\Big(\frac{1}{3} \text{Bern}(1-p[\frac{1}{2}]) + \frac{2}{3} \text{Bern}(p[\frac{1}{2}])  \Big\| \text{Bern}(p[\frac{1}{2}])  \Big) \Big\}.
\end{aligned}
\end{equation}
where (a) and (b) are by the selection rule of $hiePM$ that $\.\pi^{\{l\}}_{q_t}(t)>2\slash 3$ whenever $l_t > l$ and that $1 \slash 3 \leq \rho \leq 2 \slash 3$. This concludes the assertion.
\end{proof}

\begin{lemma} \label{lemmaS:EJSdyaPM}
Using $dyaPM$ with resolution $\delta$, the nested log-likelihood $U^{\{l\}}(t)$ of lower resolution level $l<\log_2(1\slash \delta))$ defined in (\ref{eqS:nestedU}) is a submartigale. In particular, we have 
\begin{equation} \label{eqS:dyaPM_Kd}
\begin{aligned}
    & \expect{ U^{\{l\}}(t+1) | \.\pi(t) } -  U^{\{l\}}(t)  \geq K_d :=  \\
   & \min \Bigg\{ \min_{\rho\in [0,1\slash 4]} \max \{ f(\rho),g(\rho)\} ,  \min_{\rho\in [1\slash 4,1\slash 2]} f(\rho), \\
   &\quad \qquad \frac{1}{4} D\[( \frac{1}{4}  B_1 + \frac{3}{4} B_0 \Big\|  B_0 \]) \Bigg\}  > 0, 
\end{aligned}
\end{equation}
for any $t>0$ and $l<\log_2(1\slash\delta)$, where 
$B_1 = \text{Bern}(1-p[1\slash 2])$, $B_0 = \text{Bern}(p[1\slash 2])$, 
\begin{equation}
f(\rho) =  \rho D\[( B_1 \big\|   (3 \slash 4) B_1 + (1 \slash 4) B_0  \])    
\end{equation}
\begin{equation}
    \begin{aligned}
        g(\rho) = (1\slash 2 - \rho)  
         & D\Big( (1-4\rho)B_1 + 4\rho B_0 \ \Big\| \\ &  (1\slash 2 + \rho)B_1 +  (1\slash 2 - \rho)B_0 \Big)
    \end{aligned}.
\end{equation}
\end{lemma}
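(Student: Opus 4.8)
The plan is to follow the template of the $sortPM$ and $hiePM$ drift computations (Lemmas~\ref{lemmaS:EJSsortPM} and~\ref{lemmaS:EJShiePM}), adapting it to the ``prefix-of-a-dyadic-block'' geometry of $dyaPM$. First I would write the one-step increment of the coarse log-likelihood $U^{\{l\}}$ of (\ref{eqS:nestedU}) as a coarse-binned divergence: repeating the Bayes-rule manipulation of the proof of Lemma~\ref{lemmaS:EJSsortPM} (and, where convenient, relaxing to the Jensen--Shannon form via Fact~\ref{factS:JS_EJS}) gives
\begin{equation}
\expect{ U^{\{l\}}(t+1) | \.\pi(t) } - U^{\{l\}}(t) = \sum_{q=1}^{2^l}\pi_q^{\{l\}}(t)\, D\left( P_{y_{t+1}| q,\gamma} \,\middle\|\, P_{y_{t+1}|\neq q,\gamma} \right),
\end{equation}
where, for a coarse bin $q$, the ``in'' law $P_{y_{t+1}| q,\gamma}$ is the mixture $a_q B_1+(1-a_q)B_0$ with $a_q$ the posterior fraction of bin $q$ covered by $S_{t+1}$ (recall (\ref{eqS:Pyt_original})). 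The two structural facts of $dyaPM$ I would record first are: (a) by Algorithm~\ref{algS:dyaPM}, $S_{t+1}$ is a left-prefix of the smallest dyadic block $H^{m^*_t}_{l^*_t}$ of posterior mass $\geq 1/2$, so it always begins at a level-$l$ bin boundary; and (b) the matching step forces $\pi_{S_{t+1}}(t)\in(1/4,3/4)$ whenever $l^*_t<L$, because the single bin that crosses the $1/2$ level carries mass $<1/2$ (a bin of mass $\geq 1/2$ would itself be queried, giving $l^*_t=L$).

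Next I would split on $l^*_t$ versus $l$. If $l^*_t\geq l$, then $H^{m^*_t}_{l^*_t}$, and hence all of $S_{t+1}$, lies inside one coarse bin $q_t$ with $\pi^{\{l\}}_{q_t}(t)\geq 1/2$, and every other bin is disjoint from $S_{t+1}$, so $P_{y_{t+1}|\neq q_t}=B_0$. Dropping all but the $q_t$ term and using $\pi_{S_{t+1}}\geq 1/4$ (which forces both $\pi^{\{l\}}_{q_t}\geq 1/4$ and $a_{q_t}\geq 1/4$) together with the monotonicity of $a\mapsto D(aB_1+(1-a)B_0\|B_0)$ yields the third term $\tfrac14 D(\tfrac14 B_1+\tfrac34 B_0\|B_0)$ of $K_d$. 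This is the benign regime, corresponding to some coarse bin already holding at least half the mass.

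The substantive case is $l^*_t<l$, where $S_{t+1}$ spans several coarse bins: it covers some fully (so $a_q=1$ and the ``in'' law is $B_1$) and one boundary bin partially, while every coarse bin has $\pi^{\{l\}}_q<1/2$. I would parametrize the truncation configuration by a single scalar $\rho$ and extract two competing lower bounds. Keeping a fully-covered bin and upper-bounding the $B_1$-weight of its ``out'' law by $3/4$ (using $\pi_{S_{t+1}}\leq 3/4$ and the elementary inequality $\tfrac{A-x}{1-x}\leq A$ for $0\le x\le A\le 1$) produces $f(\rho)$; keeping the partial boundary bin produces $g(\rho)$. Since for each configuration at least one of these two bins delivers a bound bounded away from $0$, I would lower bound the drift by $\max\{f(\rho),g(\rho)\}$ and minimize over $\rho\in[0,1/4]$, while for $\rho\in[1/4,1/2]$ the fully-covered term $f(\rho)$ alone suffices. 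Combining the three regimes gives $K_d$ as stated, after which Lemma~\ref{lemmaS:azuma} closes the tail bound exactly as in the proof of Theorem~\ref{thmS:sortPM}.

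The main obstacle is precisely the $l^*_t<l$ regime when the matching step cannot drive $\pi_{S_{t+1}}$ near $1/2$ at the coarse scale because the boundary coarse bin carries appreciable mass: there, a single fully-covered bin yields only a weak $f(\rho)$ and the partial bin only a weak $g(\rho)$, so neither alone is uniformly positive. This is what forces the $\max\{f(\rho),g(\rho)\}$ structure and the split of the $\rho$-range, and the last --- and least mechanical --- step is to verify that $\min_{\rho\in[0,1/4]}\max\{f(\rho),g(\rho)\}$ and $\min_{\rho\in[1/4,1/2]}f(\rho)$ are both strictly positive, so that $K_d>0$ and the submartingale has a genuine positive drift.
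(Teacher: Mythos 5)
Your proposal follows essentially the same route as the paper's proof: the same coarse-binned drift identity, the same dichotomy between a query that covers whole level-$l$ bins and one confined to a single bin, the same parametrization by the fully-covered mass $\rho$ yielding the two competing bounds $f(\rho)$ and $g(\rho)$, and the same final positivity check on $\max\{f,g\}$ over $[0,1/4]$ and on $f$ over $[1/4,1/2]$. The one place to tighten is your case split on $l^*_t$ versus $l$: when $l^*_t<l$ it can still happen that $k^*$ falls inside the first level-$l$ bin of $H^{m^*_t}_{l^*_t}$, so $S_{t+1}$ covers no coarse bin fully (and in that situation $\pi^{\{l\}}_{q}\geq 1/2$ need not hold for the containing bin either); the paper instead splits on whether $\mathrm{bin}(q)\subseteq S_{t+1}$ for some $q$, and this stray sub-case is absorbed into the within-a-bin branch using only $\pi^{\{l\}}_{q^*}\geq\pi_{S_{t+1}}\geq 1/4$, which your third-term argument already supplies.
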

\begin{proof}
By similar algebraic effort as in  [Theorem 1 in~\cite{Naghshvar2015}], the expected drift can be written as
\begin{equation} \label{eqS:nestedEJS}
\begin{aligned}
    &\expect{ U^{\{l\}}(t+1) | \.\pi(t) } -  U^{\{l\}}(t) \\
    & =  \sum_{q=1}^{2^l} \pi^{\{l\}}_q(t) D\[( P_{y_{t+1}|  \in \text{bin}(q) , \gamma} \Big\| P_{y_{t+1}| \notin \text{bin}(q), \gamma } \]),
\end{aligned} 
\end{equation}
where
\begin{equation}
\begin{aligned}
    &P_{y_{t+1} | \in \text{bin}(q) , \gamma } :=  \frac{1}{ \pi_q^{\{l\}}(t) } \\
    & \quad  \times \sum_{i\in \text{bin}(q)} \pi_i(t) p\big(y_{t+1} \big|\theta = i , S_{t+1} = \gamma(\.\pi(t)) \big) 
\end{aligned}
\end{equation} and 
\begin{equation}
\begin{aligned}
    &P_{y_{t+1} | \notin \text{bin}(q) , \gamma } :=  \frac{1}{\sum_{i\notin \text{bin}(q)} \pi_i(t) } \\
    & \quad  \times \sum_{i\notin \text{bin}(q)} \pi_i(t) p\big(y_{t+1} \big|\theta = i , S_{t+1} = \gamma(\.\pi(t)) \big) .
\end{aligned}
\end{equation}

We drop $(t)$ and write $\.\pi \equiv \.\pi(t)$ in the proof frequently for notational simplification. We write the starting index of $H^{m^*_t}_{l^*_t}$ as $d\equiv m^*_t 2^{L-l^*_t}$. Furthermroe, let the bin of level $l$ that contains $k^*$ be $q^*$, $i.e.$  $k^*\in \text{bin}(q^*)$ and $b_m = \min (\text{bin}(q^*))$ and $b_M = \max (\text{bin}(q^*))$. 

The case of $l=\log_2(1\slash\delta)$ is done by Lemma \ref{factS:EJSori_sort_dya}. For any given $l<\log_2(1\slash\delta)$, we separate into two cases: 
\begin{enumerate}
    \item $S_{t+1}=\gamma_d( \.\pi(t))$ contains at least one bin of level $l$, $i.e.$ bin$(q)\subseteq S_{t+1}$ for some $q$: 
    \begin{equation} \label{eqS:EJSdya_case1}
        \begin{aligned}
    &\expect{ U^{\{l\}}(t+1) | \.\pi(t) } -  U^{\{l\}}(t) \\
     &= \sum_{q=1}^{2^l} \pi^{\{l\}}_q(t) D\[( P_{y_{t+1}|  \in \text{bin}(q) , \gamma} \Big\| P_{y_{t+1}| \notin \text{bin}(q), \gamma } \]) \\
     & \stackrel{(a)}{\geq}  \max\Bigg\{ \pi_{[d,b_m-1]} D\[( B_1 \big\|   \pi_{[d,k^*]} B_1 + (1-\pi_{[d,k^*]}) B_0  \]) , \\
    &  \qquad \pi^{\{l\}}_{q^*}     D\Big(\frac{\pi_{[b_m,k^*]}}{\pi^{\{l\}}_{q^*}} B_1 + \frac{\pi_{[k^*+1,b_M]}}{\pi^{\{l\}}_{q^*}} B_0 \ \Big\|  \\ 
         & \qquad \qquad \qquad \pi_{[d,k^*]} B_1 + (1-\pi_{[d,k^*]}) B_0 \ \Big)  \Bigg\},
        \end{aligned}
    \end{equation}
    where we used 
    \begin{equation}
        \begin{aligned}
            &D\[( P_{y_{t+1}|  q , \gamma} \Big\| P_{y_{t+1}| \neq q, \gamma } \]) \geq D\[( P_{y_{t+1}|  q , \gamma} \Big\| P_{y_{t+1}|  \gamma } \]) \\
            &D\[( P_{y_{t+1}|  q , \gamma} \Big\| P_{y_{t+1}| \neq q, \gamma } \]) \geq 0
        \end{aligned}
    \end{equation}
    in (a). Note that by the binary tree construction of $H^{m}_{l}$, we have $ [b_m,k^*] \subseteq \text{bin}(q^*) \subseteq H^{2m^*_t}_{l^*_t+1}$. Therefore, 
    \begin{equation} \label{eqS:dya_proof_eq1}
    \pi_{[b_m,k^*]} \leq  \pi^{\{l\}}_{q^*} \leq    \pi_{H^{2m^*_t}_{l^*t+1}} \leq \frac{1}{2}.
    \end{equation}
    By the selection rule of $k^*$ and that $\pi_k\leq 1\slash 2$, we also know that $\pi_{[d,k^*]} \leq 3 \slash 4$. Together with (\ref{eqS:dya_proof_eq1}) we can lower bound the first part in (\ref{eqS:EJSdya_case1}) as
    \begin{equation}
    \begin{aligned}
        &\pi_{[d,b_m-1]} D\[( B_1 \big\|   \pi_{[d,k^*]} B_1 + (1-\pi_{[d,k^*]}) B_0  \]) \\
        & \geq \rho D\[( B_1 \big\|   (3 \slash 4) B_1 + (1 \slash 4) B_0  \]) := f(\rho) \\
    \end{aligned}
    \end{equation}
    where we used $\rho \equiv \pi_{[d,b_m-1]}$ for further simplification of the notation.
    
    On the other hand, without loss of generality we assume that $k^* < b_M$ (otherwise if $k^* = b_M$, it reduces to the case of Lemma \ref{factS:EJSori_sort_dya}). By the selection rule of $k^*$ and that $k^* < b_M$, we have 
    \begin{equation} 
        0 \leq \frac{1}{2} - \pi_{[d,k^*]} \leq \pi_{[d,b_M]} -  \frac{1}{2}
    \end{equation}
    which can be re-written as
    \begin{equation} \label{eqS:dya_proof_eq2}
        0 \leq \frac{1}{2} - \rho - \pi_{[b_m,k^*]} \leq \rho + \pi^{\{l\}}_{q^*}  -  \frac{1}{2}.
    \end{equation}
    Therefore,
    \begin{equation} \label{eqS:dya_proof_second_eq1}
        \begin{aligned}
            \frac{\pi_{[b_m,k^*]}}{\pi^{\{l\}}_{q^*}} & \stackrel{(b)}{\geq} \frac{1- \pi^{\{l\}}_{q^*} - 2\rho }{\pi^{\{l\}}_{q^*}} \\
            & = \frac{1- 2\rho }{\pi^{\{l\}}_{q^*}} -1  \stackrel{(c)}{\geq} 1-4\rho,
        \end{aligned}
    \end{equation}
    where (b) is by (\ref{eqS:dya_proof_eq2}) and (c) by (\ref{eqS:dya_proof_eq1}). And again by (\ref{eqS:dya_proof_eq1}) we also have
    \begin{equation} \label{eqS:dya_proof_second_eq2}
        \pi_{[d,k^*]} \leq \pi_{[d,b_M]} = \rho + \pi^{\{l\}}_{q^*} \leq \rho + \frac{1}{2}.
    \end{equation}
    With (\ref{eqS:dya_proof_eq2}), (\ref{eqS:dya_proof_second_eq1}) and (\ref{eqS:dya_proof_second_eq2}), the second part in equation (\ref{eqS:EJSdya_case1}) can then be lower bounded as
    \begin{equation}
    \begin{aligned}
        &\pi^{\{l\}}_{q^*}  D\Big( \frac{\pi_{[b_m,k^*]}}{\pi^{\{l\}}_{q^*}} B_1 + \frac{\pi_{[k^*+1,b_M]}}{\pi^{\{l\}}_{q^*}} B_0\  \Big\| \\
        & \qquad \qquad  \qquad \pi_{[d,k^*]} B_1 + (1-\pi_{[d,k^*]}) B_0 \Big) \\
        & \geq  (1\slash 2 - \rho)   D\Big(  (1-4\rho)B_1 + 4\rho B_0\ \Big\| \\
        & \quad \qquad  \qquad \qquad   (1\slash 2 + \rho)B_1 +  (1\slash 2 - \rho) B_0 \Big) \\
        &:= g(\rho).
    \end{aligned}
    \end{equation}
    Therefore (\ref{eqS:EJSdya_case1}) is lower-bounded by $K_d$ defined in (\ref{eqS:dyaPM_Kd}). It remains to show that $K_d>0$. Now, since $f(\rho) > 0$ is increasing for $\rho \in (0,1\slash 2]$ and $g(0) > 0$, we have
    \begin{equation}  \label{eqS:EJSdya_case1_final}
    \begin{aligned}
        &\min_{\rho\in (0,1\slash 4]} \max \{ f(\rho),g(\rho)\}  > 0\\
        &\min_{\rho\in (1\slash 4,1\slash 2]} f(\rho) > 0
    \end{aligned},
    \end{equation}
    concluding case 1.
    \item $S_{t+1}=\gamma_d( \.\pi(t))$ is within a bin of level $l$, $i.e.$  $S_{t+1} \subseteq \text{bin}(q^*)$: 
    \begin{equation} \label{eqS:EJSdya_case2}
      \begin{aligned}
     &\expect{ U^{\{l\}}(t+1) | \.\pi(t) } -  U^{\{l\}}(t) \\
     & \qquad= \sum_{q=1}^{2^l} \pi^{\{l\}}_q(t) D\[( P_{y_{t+1}|  q , \gamma} \Big\| P_{y_{t+1}| \neq q, \gamma } \]) \\
     & \qquad \geq  \pi^{\{l\}}_{q^*} D\[(\frac{\pi_{[b_m,k^*]}}{\pi^{\{l\}}_{q^*}} B_1 + \frac{\pi_{[k^*+1,b_M]}}{\pi^{\{l\}}_{q^*}} B_0 \Big\|  B_0 \])  \Big\}.
        \end{aligned}
    \end{equation}
    
    By the selection rule of $k^*$ and that $S_{t+1} \subseteq \text{bin}(q^*)$, we know that $\pi^{\{l\}}_{q^*} \geq \pi_{S_{t+1}} \geq 1\slash 4$ and that $\frac{\pi_{[b_m,k^*]}}{\pi^{\{l\}}_{q^*}} \geq \pi_{[b_m,k^*]} = \pi_{S_{t+1}} \geq 1\slash 4 $. Therefore,
    \begin{equation} \label{eqS:EJSdya_case2_final}
        (\ref{eqS:EJSdya_case2}) \geq \frac{1}{4} D\[( \frac{1}{4}  B_1 + \frac{3}{4} B_0 \Big\|  B_0 \]).
    \end{equation}
\end{enumerate}
The result is concluded by combining the two cases from (\ref{eqS:EJSdya_case1_final}) and (\ref{eqS:EJSdya_case2_final}).
\end{proof}

\end{document}